\newtheorem{assumption}{Assumption}
\newcommand\bigforall{\mbox{\Large $\mathsurround0pt\forall$}}
\newcommand\bigexists{\mbox{\Large $\mathsurround0pt\exists$}}
\title{On Tractability of Ulam's Metric in Higher Dimensions and 
Dually Related Hierarchies}
\author{Sebastian Bala}{Institute of Computer Science University of Opole, Kopernika 11a, Opole, Poland}{sbala@uni.opole.pl}{}{}
\author{Andrzej Kozik}{Institute of Computer Science University of Opole, Kopernika 11a, Opole, Poland}{akozik@uni.opole.pl}{}{}
\titlerunning{On Tractability of Ulam's Metric in Higher Dimensions}
\authorrunning{S. Bala and A. Kozik}%TODO mandatory. First: Use abbreviated first/middle names. Second (only in severe cases): Use first author plus 'et al.'
\keywords{Longest common subsequence generalisation, computational complexity, intractability}%TODO mandatory; please add comma-separated list of keywords
\begin{document}
\maketitle
\begin{abstract} 
	Ulams’s metric defines the minimal number of arbitrary extractions and insertion of permutation elements and 
	to get the second permutation. The remaining elements constitutes the longest common subsequence of permutations.
	In this paper we extend Ulam’s metric $n$-dimensions. One dimension object is defined as a pair of permutations
	while $n$-dimension object is defined as a pair of $n$-tuples of permutations.
	For the purpose of encoding we explore $n$-tuple of permutations in order to define duallity of mutually related hierarchies.
	Our very first motivation comes from Murata, Fujiyoshi, Nakatake, and Kajitani paper, in which pairs of permutations are used as a representation of topological
	relation between an rectangles of a given size that can be placed within the area of minimal size.
	It is applicable to Very Large Scale Integration VLSI design. Our results concern hardness, approximability, and parameterized complexity within these hierarchies respectively.
\end{abstract}

\section{Introduction}

Permutation $\sigma \in S_n$ is a sequence $\left( \sigma(1), \sigma(2), \ldots, \sigma(n)\right)$ representing an arrangement of elements of the set $[n]$. 
%Such way of expressing a linear ordering of some set elements makes permutations foundational 
%objects of combinatorics~\cite{Bona}. 
%At the same time, 
Permutations are of great practical importance, as they model solutions to many real-life problems, e.g., in the fields of scheduling~\cite{Pinedo} and routing~\cite{VRP}. 
Taking into account significance and applications of permutations, a vast body of research tackled them from different angles, sides and points of view. 

One special research direction has been raised by Stanis\l{}aw Ulam questions~\cite{Ulam}: what is the minimum number of moves to go from some permutation $\sigma_s \in S_n$ to some other permutation $\sigma_t \in S_n$, 
where a move consists of changing a position of some element in a permutation, and more importantly, what is asymptotic distribution of this number? 

The answer for the first question is named \emph{the Ulam's metric} $U(\sigma_s, \sigma_t)$. Computationally, it is equivalent to $LCS(\sigma_s, \sigma_t)$ - \emph{the Longest Common Subsequence} of $\sigma_s$ and $\sigma_t$, and 
Fredman~\cite{Fredman} showed that it can be computed using $n \log n - n \log \log n + O(n)$ comparisons in the worst case, and no algorithm has better worst-case performance. Later, Hunt and Szymanski~\cite{Hunt} improved this bound to 
$O(n \log \log n)$ by exploiting RAM model of computations and a fixed universe of sequence elements.

The latter Ulam's question drew researcher's attention over the past 50 years, providing many amazing results and connections between various areas of mathematics, physics and statistics, exhibited and reviewed 
in~\cite{Aldous99longestincreasing},\cite{Romik},\cite{Corvin}, and foremost \emph{the final answer} of Baik, Deift and Johansson~\cite{Baik}: the expected value of the length of $LCS(\sigma_s, \sigma_t)$ for $\sigma_s$ and $\sigma_t$ drawn uniformly at 
random from $S_n$ equals $2\sqrt{n} - 1.77108n^{1/6} + o(n^{1/6})$.

As permutations often express solutions to many combinatorial optimization problems, the Ulam's metric is often interpreted as the length of the shortest path connecting (any) two given solutions in a space spanned by $S_n$.
 Motivated by surprising discoveries that followed original Ulam's questions \cite{Corvin},\cite{Romik}, in this paper we rephrase and state the same question, but in higher dimensions: what is the length of the shortest path 
 between two solutions in $S_n^k$,
\begin{equation*}
S_n^k = \underbrace{S_n \times \ldots \times S_n}_{k\mathrm{~times}},
\end{equation*}
the $k$-dimensional space of permutations?

The question is not of theoretical interest only, as 2-tuples of permutations describe non-overlapping packings of rectangles on a~plane in \emph{Sequence Pair} (SP) representation \cite{Murata}. The SP has many successful industrial applications in 
the context of physical layout synthesis of VLSI circuits, where it models a~placement of transistors, leaf-cells and macro-blocks on a silicon die~\cite{Markov}. On the other hand, SP can be a solution space for multiprocessor scheduling problems with 
various definitions of cost functions and constraints~\cite{Imahori}\cite{Kozik17}\cite{KozikCOR}. 

On the other hand, existence and properties of paths between solutions in a solution space are especially important for metaheuristic algorithms~\cite{Blum}, performing effective exploration of the solution space based on elimination and 
explorative properties - incremental moves and neighbourhood structures are essential in the context of guiding the search process. Examples of such methods are hybrid-metaheuristics~\cite{Blum2}, combining complementary strengths of various techniques 
to collaboratively tackle hard optimization problems, and hyper-heuristic approaches~\cite{Burke}, providing a robust upper-level framework to tune and drive underlying heuristics to adapt to features of solved problems. The efficiency of such methods 
is based on the connectivity and diameter properties of the solution space - the assumption that one can provide a~sequence of moves to every (starting) solution such that every other solution can be reached (especially an optimal one), and the maximal 
length of such a sequence, respectively. On the other hand, algorithms behind the Ulam's metric can be immediately applied as crossover operators in evolutionary and path-relinking metaheuristics~\cite{Blum}., i.e., given a shortest path between two solutions, 
a result of their crossover could be either a midpoint, the best, or even all solutions on that path as an offspring.

\section{Preliminaries}
Let a subsequence $s$ of $\sigma \in S_n$ be a sequence $\left( \sigma(i_1), \sigma(i_2), \ldots, \sigma(i_m)\right)$ where $1 \leq i_1 < \ldots < i_m \leq n$; denote by $\{s\}$ a set of elements of $s$, i.e., 
$\{s\} = \{ \sigma(i_1), \sigma(i_2), \ldots, \sigma(i_m) \} \subseteq [n]$ and by $l(s) = m$ its length.
%Let $S=\bigcup_{m=1}^n [n]^m$, a set common subsequences of two permutations we denote by
A set common subsequences of two permutations we denote by
%\begin{eqnarray*}
%CS(\sigma_s, \sigma_t) & = & \left\{\sigma \in S \;:\; \forall m \in [n]\; \forall {i,j \in [m]}\; (\sigma \in [n]^m \; \wedge\; i<j) \Rightarrow  \right. \\
%& & \left. (\sigma^{-1}_s(\sigma(i)) < \sigma^{-1}_s(\sigma(j)) \;\wedge\;
%\sigma^{-1}_t(\sigma(i))<\sigma^{-1}_t(\sigma(j)) )\right\}.
%\end{eqnarray*}
\begin{eqnarray*}
CS(\sigma_s, \sigma_t) & = & \left\{\sigma \in [n]^m \;:\; m \in [n] \hbox{ and } \forall {i,j \in [m]}\right.  \\
& & \left. i<j \Rightarrow  (\sigma^{-1}_s(\sigma(i)) < \sigma^{-1}_s(\sigma(j)) \;\wedge\;
\sigma^{-1}_t(\sigma(i))<\sigma^{-1}_t(\sigma(j)) )\right\}.
\end{eqnarray*}

%Then, for $\sigma_s, \sigma_t \in S_n$
%\begin{equation} \label{U1}
%U(\sigma_s, \sigma_t) = n - l\big ( LCS(\sigma_s, \sigma_t)\big ).
%\end{equation}
 Let $\Gamma \in S_n^k$ be a $k$-tuple of permutations of $[n]$, i.e., $\Gamma = (\sigma^1, \ldots, \sigma^k)$, $\sigma^i \in S_n$, $i \in [k]$. Say an \emph{insert~move} of some $v \in [n]$ in $\Gamma$ consist of moving the element $v$ to some other 
position in each of $\Gamma$ permutations. Define an insert neighbourhood of $\Gamma$, $\mathcal{N}(\Gamma)$, as a set of permutation tuples obtained by performing single insert move of any element in $\Gamma$.

Let $f_k: S^k_n \times S^k_n \times 2^{[n]} \rightarrow [n]$ be defined in the following way:
$$
f_k(\Gamma_s,\Gamma_t,C) = \left\{
\begin{array}{ll}
|C| & \mbox{ if } \mathop{\bigforall} \limits_{i \in [k]} \; \mathop{\bigexists} \limits_{\sigma \in CS(\sigma_s^i, \sigma_t^i)} \{\sigma\} = C \\
0 & \mbox{ otherwise }
\end{array}
\right.
$$
\begin{eqnarray} \label{Uk}
U_k(\Gamma_s, \Gamma_t) & = &\{ [n] \setminus C\;:\; C\in LFS(\Gamma_s, \Gamma_t) \}, \\
& & \mbox{ where } LFS(\Gamma_s, \Gamma_t) = \operatorname*{argmax}_{C \subseteq [n]} \; f_k(\Gamma_s,\Gamma_t,C) \nonumber
\end{eqnarray}
% The $LFS(\Gamma_s, \Gamma_t)$, the \emph{Largest Fixed Subsets} of two tuples of permutations, is a class of largest subsets of $[n]$ whose elements form common subsequences of corresponding permutations of $\Gamma_s$ and $\Gamma_t$.
Note that $LFS(\Gamma_s, \Gamma_t)$, as well as $U_k(\Gamma_s, \Gamma_t),$ is a class of equsized sets.

Define a path between $\Gamma_s$ and $\Gamma_t$ in $S_n^k$ as a sequence $\Gamma_0 = \Gamma_s, \Gamma_1, \ldots, \Gamma_{m-1}, \Gamma_m = \Gamma_t$ s.t. $\Gamma_{i+1} \in \mathcal{N}(\Gamma_i)$, $i = 0, \ldots, m-1$; let $m$ be the length of that path. 
The Ulam's metric in $S_n^k$, is the length of the shortest path between $\Gamma_s$ and $\Gamma_t$, i.e., the minimal number of insert moves transforming $\Gamma_s$ into $\Gamma_t$, where $\Gamma_s, \Gamma_t \in S_n^k$.
Following \cite{Aldous99longestincreasing} it is not hard to show that the Ulam's metric is the size of sets in $U_k(\Gamma_s, \Gamma_t)$,

A \textit{sequence pair} $(\sigma^{1},\sigma^{2}) \in S_n^2$ is a pair of permutations over $[n]$. 
%A $k$-\textit{sequence tuple} is a $k$-tuple of permutations $(\sigma^{1},\sigma^{2},\ldots,\sigma^{k}) \in S_n^k.$
In the literature, according to Ulam's original motivation, $U_1$ is often interpreted as sorting of a hand of bridge cards, by sequentially picking cards
labelled by numbers which are not in the longest common sequence and putting them back in desired position. It is shown in Figure~1, using sequence pairs representation, 
the $U_2$ can be analogously interpreted geometrically as a sequence of rectangle packings, in which in each step a single rectangle is taken from a packing and re-inserted in desired place. 
To more precise explanation how sequence pairs corresponds to rectangle placement 
see \cite{Murata}. 
\begin{figure}
%\begin{center}
\includegraphics[angle=0, width=11.5cm, height=3cm]{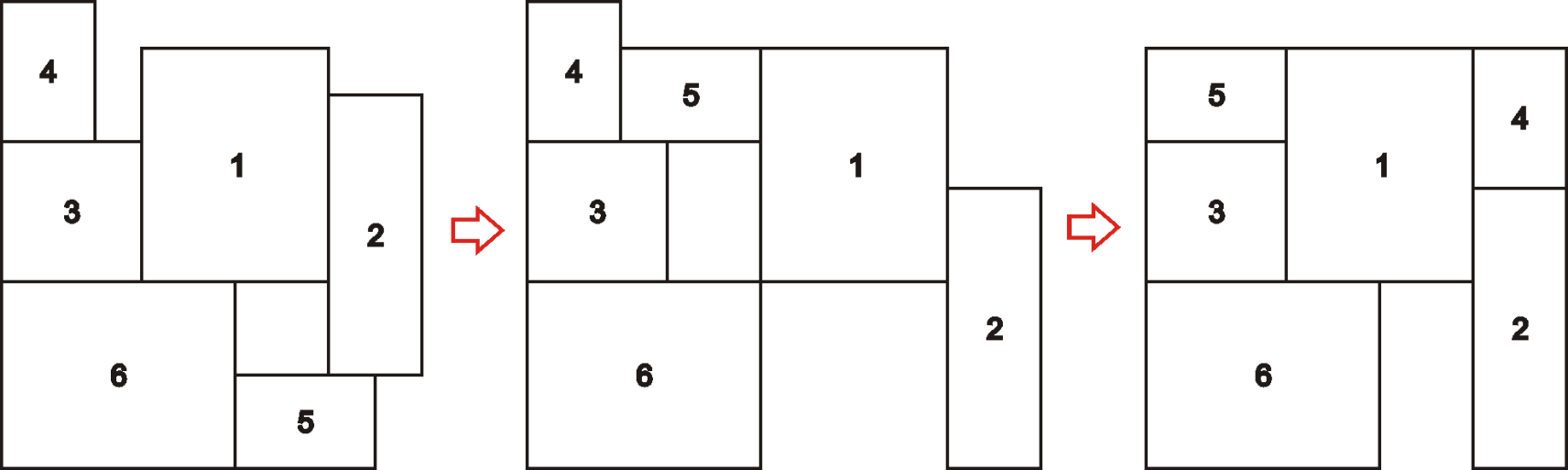}
\caption{$ \langle (43162\mathbf{5}),(6\mathbf{5}3412) \rangle \rightarrow \langle (\mathbf{4}53162),(63\mathbf{4}512) \rangle \rightarrow \langle (531462),(635124) \rangle$}
\label{recrec}
%\end{center}
\end{figure}

For two given finite sequences $a$ and $b,$ its join is denoted by $a\cdot b.$
By $(i),$ we denote a sequence that consists of a single element $i.$
For a given pair $(\sigma^i_s,\sigma^i_t),$ a subset $B\subseteq [n]$ \textit{induces} a common subsequence
if there is a sequence $(b_1,\ldots, b_l) \in CS(\sigma^i_s,\sigma^i_t)$ such that the $\{b_1,\ldots, b_l\}=B.$
For given $B,$ we also say, that a sequence $(b_1,\ldots, b_l)$ is a $B$-{\it induced} subsequence of permutation $\sigma$
if $(b_1,\ldots, b_l)$ is a subsequence of $\sigma$ and $\{b_1,\ldots, b_l\}= B.$

Probably the most convenient form of defining of an \textsc{NPO} problem $A$ is its
presentation as a fortuple $(\mathcal{I}(A),fso,cost_A,type),$ where:
\begin{itemize}
	\item $\mathcal{I}(A)$ is a set of valid instances of $A.$
	\item $fso$ is a function such that for given an instance $x\in \mathcal{I}(A),$ $fso(x)$ is a set of feasible solutions for $x.$
	Moreover the question if $y\in fso(x)$? is verifiable in polynomial time with respect to the size of $x.$
	\item Given an $x\in\mathcal{I}(A)$ and a feasible solution $y$ of $x,$ $cost_A(x,y)$ is positive integer measure of $y.$
	Additionally $cost_A(x,y)$ is computable in polynomial time.
	\item $type \in \{\min,\max\}.$
\end{itemize}

Given a pair of $k$-tuple of permutations $\Gamma=\langle \Gamma_s, \Gamma_t\rangle,$ where $\Gamma_s=(\sigma^1_s,\ldots \sigma^k_s)$ and  
$\Gamma_t=(\sigma^1_t,\ldots \sigma^k_t),$ a set $B \subseteq [n]$ is a {\it feasible solution} of $\Gamma$ if $B$
induces a common subsequence for every $(\sigma_s^{i},\sigma_t^{i})$, where $i=1,\ldots,k.$
The $k$-\emph{dimensional Largest Fixed Subset Problem} (k\textsc{LFS}) is defined as follows:
Given a pair of $k$-tuple of permutations $\Gamma=\langle \Gamma_s, \Gamma_t\rangle,$ where $\Gamma_s=(\sigma^1_s,\ldots \sigma^k_s)$ and  
$\Gamma_t=(\sigma^1_t,\ldots \sigma^k_t),$ determine a maximal $l$ and $B \subseteq [n]$ such that $|B|=l$ and $B$ is a feasible solution of $\Gamma.$
The cost function $cost_{k\textsc{LFS}}$ satisfies $cost_{k\textsc{LFS}}(\Gamma,B)=|B|.$

A decision version of $k$-\emph{dimensional Longest Fixed Subset Problem} (k\textsc{LFSD}) is answering the question if for given nonnegative integer $l$ 
and $\langle \Gamma_s, \Gamma_t\rangle= \langle (\sigma^1_s,\ldots \sigma^k_s), (\sigma^1_t,\ldots \sigma^k_t)\rangle$
there exists $B \subseteq [n],$ such that $|B|=l$ and $B$ is a feasible solution of $\Gamma.$

By $k\textsc{U}$ we denote a dual problem to k\textsc{LFS}. The $k\textsc{U}$ is defined over the same inputs as k\textsc{LFS}
and if for a given input, a set $B$ is a feasible solution to the k\textsc{LFS} then a set $V \setminus B$ is a feasible solution to the $k\textsc{U}$ problem. 
The $k\textsc{U}$ problem consists in find the smallest feasible solution. The $cost_{k\textsc{U}}$ satisfies $cost_{k\textsc{U}}(\Gamma,B)=|B|.$  
The decision version $k\textsc{UD}$ is defined analogously to the k\textsc{LFSD} problem.

In the \textsc{MaxClique} problem we are given undirected graph $G = (V, E)$ and the task is to find a clique of maximum size.
The decision version of the \textsc{MaxClique} problem -- the \textsc{Clique} problem  consists in checking for a given $m$ if the graph consists a clique of size $m.$
For the \textsc{MinVertexCover} problem the task is to find a subset $B\subseteq V$ of minimal cardinality such that, for every $(u,v) \in E,$ $u \in B$ or $v\in B.$  

Solutions of the \textsc{MaxClique} and the \textsc{MinVertexCover} problem are related to each other in the following way:
subset $B$ of $V$ is a minimum vertex cover if and only if $V \setminus B$ is a maximum clique
in $G.$ The \textsc{VertexCover} problem denote decision version of the \textsc{MinVertexCover} problem.

 Let $A$ be an maximization problem. We use $Opt_A(x)$ to refer to the cost of optimal solution for $x\in \mathcal{I}(A).$
An algorithm $\mathcal{A}$, for $A,$ computes objective function value $\mathcal{A}(x)$ for feasible solution of an instance $x.$
Define $F(\mathcal{A},x)=\frac{Opt_A(x)}{\mathcal{A}(x)}.$ Function $r:\mathbb{Z}^+ \mapsto \mathbb{R}^+$ is an \textit{approximation factor} of $\cal A$ if for any $n,$ $F(\mathcal{A},x) \leq r(n)$ 
for all instances of the size $n.$ 
We say that algorithm $\mathcal{A}$ is $r(n)$-approximation algorithm. If $A$ is an minimization problem then $F(\mathcal{A},x)=\frac{\mathcal{A}(x)}{Opt_A(x)}.$

Let $cost_A(x,y)$ denote the cost of $y$ that is feasible solution of $x\in\mathcal{I}(A).$
A \emph{reduction} of an optimization problem $A$ to an optimization problem $B$ is a pair of polynomially computable functions $(f,g)$
that satisfy two conditions: $(1)$ $x'=f(x) \in \mathcal{I}(B)$  for any $x\in \mathcal{I}(A),$ $(2)$ $y=g(x,y')$ is feasible solution of $x,$ for any $y'$ which is feasible solution of $x'.$
A reduction $(f,g)$ of an optimization problem $A$ to an optimization problem $B$ is an \emph{S-reduction} if $(1)$ $Opt_A(x)=Opt_B(x')$ 
for any $x\in \mathcal{I}(A),$ where $x'=f(x),$ and $(2)$ $cost_A(x,g(x,y'))=cost_B(x',y'),$ for any $x\in \mathcal{I}(A)$
and $y'$ which is feasible solution of $x'$  \cite{PierluigiGuide}.
For abbreviation, we write  $A \leq_S B$ to denote that there exists $S$-reduction from $A$ to $B.$

A decision problem $A$ is \emph{parametrized problem} if it is extended by function $\kappa_A$ that assigns nonnegative integer to each instance $x$ of $A.$ An algorithm which decides if $x\in A$ in time $f(\kappa_A(x))p(|x|)$ is called \emph{fpt-algorithm}, 
where $f$ is a computable function and 
$p$ is a polynomial. An $R$ is an \emph{fpt-reduction} of $A$ to $B$ if for any $x\in \mathcal{I}(A),$ $R(x)$ is an instance of $B,$
moreover $(1)$ $x\in A$ iff $R(x)\in B;$ $(2)$ $R$ is computable by an fpt-algorithm with respect to $\kappa_A;$  $(3)$ There is computable function $h$ such that $\kappa_B(R(x)) \leq h(\kappa(x))$ for any $x\in \mathcal{I}(A).$

\section{On Tractability of k\textsc{LFS} and k\textsc{U}}
\subsection{S-reductions and some conclusions}

\begin{lemma}
\label{firstL}
For any $k\in [n]$ there exist S-reductions from k\textsc{LFS} to \textsc{MaxClique} and from k\textsc{U} to \textsc{MinVertexCover}.
\end{lemma}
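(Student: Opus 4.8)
The plan is to show that feasibility in $k$\textsc{LFS} is a purely \emph{pairwise} condition, so that the entire problem collapses to a graph problem on the $n$-element ground set. The key observation is the following: for a single pair $(\sigma,\tau)$ of permutations of $[n]$ and a set $B\subseteq [n]$, the set $B$ induces a common subsequence of $(\sigma,\tau)$ if and only if every two-element subset $\{u,v\}\subseteq B$ does. Indeed, since $\sigma$ is a permutation, the only sequence with element set $B$ that can lie in $CS(\sigma,\tau)$ is the unique $B$-induced subsequence of $\sigma$, i.e.\ the elements of $B$ listed in the order in which they occur in $\sigma$ (this is exactly what forces $\sigma^{-1}$ to increase along the sequence); that same listing belongs to $CS(\sigma,\tau)$ precisely when $\tau^{-1}$ also increases along it, which happens iff for all $u,v\in B$ the elements $u$ and $v$ appear in the same relative order in $\sigma$ and in $\tau$. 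So the first step is to record this equivalence and the attendant \emph{compatibility} relation: call $u,v\in [n]$ compatible in dimension $i$ if $(\sigma_s^i)^{-1}(u)-(\sigma_s^i)^{-1}(v)$ and $(\sigma_t^i)^{-1}(u)-(\sigma_t^i)^{-1}(v)$ have the same sign.

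Next I would define the reduction for $k$\textsc{LFS}. Given $\Gamma=\langle\Gamma_s,\Gamma_t\rangle$, let $f(\Gamma)=G_\Gamma=(V,E)$ with $V=[n]$ and $\{u,v\}\in E$ iff $u$ and $v$ are compatible in every dimension $i\in[k]$; after precomputing the inverse permutations this is computable in polynomial time (it costs $O(kn^2)$ comparisons). Applying the pairwise characterisation to each of the $k$ pairs $(\sigma_s^i,\sigma_t^i)$, one gets that $B\subseteq[n]$ is a feasible solution of $\Gamma$ if and only if $B$ is a clique of $G_\Gamma$. Taking $g(\Gamma,B)=B$, the pair $(f,g)$ is a reduction from $k$\textsc{LFS} to \textsc{MaxClique}: every clique of $G_\Gamma$ is a feasible solution of $\Gamma$, and conversely. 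It is an $S$-reduction because $Opt_{k\textsc{LFS}}(\Gamma)$, the size of a largest feasible solution, equals the size of a largest clique of $G_\Gamma$, that is $Opt_{\textsc{MaxClique}}(G_\Gamma)$, and $cost_{k\textsc{LFS}}(\Gamma,B)=|B|=cost_{\textsc{MaxClique}}(G_\Gamma,B)$ for every clique $B$.

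For $k$\textsc{U} I would use the complement graph. Put $f'(\Gamma)=\overline{G_\Gamma}$ and $g'(\Gamma,B)=B$. Then $B$ is a feasible solution of $k$\textsc{U} on $\Gamma$ iff $[n]\setminus B$ is a feasible solution of $k$\textsc{LFS} on $\Gamma$ iff $[n]\setminus B$ is a clique of $G_\Gamma$ iff $[n]\setminus B$ is an independent set of $\overline{G_\Gamma}$ iff $B$ is a vertex cover of $\overline{G_\Gamma}$; hence $(f',g')$ is a reduction from $k$\textsc{U} to \textsc{MinVertexCover}. For the $S$-reduction conditions, $Opt_{k\textsc{U}}(\Gamma)=n-Opt_{k\textsc{LFS}}(\Gamma)=n-\big(\text{max clique size of }G_\Gamma\big)$, which is exactly the minimum vertex cover size of $\overline{G_\Gamma}$, i.e.\ $Opt_{\textsc{MinVertexCover}}(\overline{G_\Gamma})$ (using that the complement of a maximum clique of $G_\Gamma$ is a minimum vertex cover of $\overline{G_\Gamma}$), and once more $cost_{k\textsc{U}}(\Gamma,B)=|B|=cost_{\textsc{MinVertexCover}}(\overline{G_\Gamma},B)$.

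I expect the only real content to be the pairwise characterisation of induced common subsequences in the first paragraph; once it is in place, the rest is bookkeeping against the definitions of reduction and $S$-reduction from the preliminaries — in particular checking that $g$ (resp.\ $g'$) sends feasible solutions to feasible solutions and that the cost functions agree verbatim, both of which are immediate since $g$ is the identity on solutions and every cost function in sight is a cardinality. A minor point to be careful about is that $\Gamma_s$ and $\Gamma_t$ share the ground set $[n]$, so that $G_\Gamma$ is well defined; the hypothesis $k\in[n]$ is not used.
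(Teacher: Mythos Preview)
Your proof is correct and follows the same strategy as the paper: build the ``compatibility'' graph on $[n]$ whose edges record pairs appearing in the same relative order in $\sigma_s^i$ and $\sigma_t^i$ for every $i$, and observe that feasible sets for $k$\textsc{LFS} are exactly the cliques of this graph. Your pairwise characterisation is precisely the content of the paper's two contradiction arguments, just isolated up front and stated more cleanly; the $g=\mathrm{id}$ and cost-preservation checks are the same.

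The one substantive difference is in the $k$\textsc{U} half. You reduce to \textsc{MinVertexCover} on the \emph{complement} $\overline{G_\Gamma}$, using that $B$ is a vertex cover of $\overline{G_\Gamma}$ iff $[n]\setminus B$ is a clique of $G_\Gamma$. The paper instead keeps $f'=f$ (the same graph $G$) and appeals to the assertion ``$B$ is a vertex cover of $G$ iff $V\setminus B$ is a clique of $G$'', which is false for general graphs (the complement of a vertex cover is an independent set, not a clique). Your passage to $\overline{G_\Gamma}$ is the correct fix, and the resulting $S$-reduction is sound.
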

\begin{proof} For a given $\langle \Gamma_s , \Gamma_t \rangle = $ $\langle (\sigma^1_s,\ldots,\sigma^k_s), (\sigma^1_t,\ldots,\sigma^k_t) \rangle$ we define a 
graph $G=(V=[n],E),$ where $(i,j) \in E$ if and only if for all $r \in [k]$ there are 
$\alpha_s, \beta_s, \alpha_t, \beta_t \in [k]:$ 
$\alpha_s < \beta_s,$ $\alpha_t < \beta_t$  and either $\sigma^r_s(\alpha_s)=i,\sigma^r_s(\beta_s)=j, \sigma^r_t(\alpha_t)=i,\sigma^r_t(\beta_t)=j$ or $\sigma^r_s(\alpha_s)=j,\sigma^r_s(\beta_s)=i, \sigma^r_t(\alpha_t)=j,\sigma^r_t(\beta_t)=i.$ 
This defines the function $f$ transforming any given instance of k\textsc{LFS} instance of \textsc{MaxClique}.  

First, note that any clique $C$ of $G,$ $C$-induces a common sequences for all pairs $\langle \sigma^i_s, \sigma^i_t\rangle$. 
Suppose to the contrary, that $C$ does not induce a common sequence for some $(\sigma^i_s,\sigma^i_t)$. 
The clique $C$ has at least two elements. Let $\alpha_s,\alpha_t$ be  the $C$-induced subsequences of $\sigma^i_s$ and $\sigma^i_t, $ respectively.
If $\alpha_s, \alpha_t$ are two different sequences over the same set of elements differs then 
there is the least index $d$ such that $\alpha_s(d) \neq \alpha_t(d).$
Therefore the element $\alpha^{-1}_s(\alpha_t(d))>\alpha^{-1}_s(\alpha_s(d))$ and $\alpha^{-1}_t(\alpha_t(d))<\alpha^{-1}_t(\alpha_s(d))$. 
The reduction has been defined in such a way that there is no edge $(\alpha_s(d),\alpha_t(d)) \in E.$ 
This is contradiction - $C$ cannot form a clique in $G.$ Therefore $g(\langle \Gamma_s , \Gamma_t \rangle,C)=C.$ So far it has been proven that 
there is reduction from k\textsc{LFS} to \textsc{MaxClique}.
Note that $cost_{MaxClique}(f(\langle\Gamma_s,\Gamma_t\rangle)=G,C)=|C|$ for any clique $C$ of $G.$ Additionally,
$cost_{kLFS}(\langle\Gamma_s,\Gamma_t\rangle,g(\langle\Gamma_s,\Gamma_t\rangle,C))=C$ can be defined as $|C|$ for any $C$ which is a clique.
Hence, the second condition of the S-reduction definition may also be satisfied.

Finally, note that maximum size clique $C$ of the graph $G$, represented as the set of vertex numbers, induces a maximum solution of the k\textsc{LFS} instance. 
To prove this assume to the contrary that there exists $D \subseteq V$ such that there are $D$-induced subsequences for all pairs $(\sigma_s^i,\sigma_t^i)$
and $|C|<|D|$ for all cliques $C \subseteq V.$ Therefore there are different $i,j\in D$ such that $(i,j)\not \in E.$ 
By definition of the reduction there exists $r,$ such that either $\sigma^r_s(\alpha_s)=i,\sigma^r_s(\beta_s)=j, \sigma^r_t(\alpha_t)=j,\sigma^r_t(\beta_t)=i$ 
or $\sigma^r_s(\alpha_s)=j,\sigma^r_s(\beta_s)=i, \sigma^r_t(\alpha_t)=i,\sigma^r_t(\beta_t)=j,$ for some 
$\alpha_s < \beta_s,$ $\alpha_t < \beta_t.$ This contradicts the fact that $i,j$ are elements of a D-induced common sequence.

Now we see that the function $g:S^k_n \times S^k_n \times 2^{V} \mapsto 2^{V}$ satisfying $g((\Gamma_s,\Gamma_t),C)=C.$ 
also satisfies $Opt_{MaxClique}(G=f(\langle\Gamma_s,\Gamma_t\rangle))=Opt_{kLFS}(\langle\Gamma_s,\Gamma_t\rangle)$ for any 
$\langle\Gamma_s,\Gamma_t\rangle.$ Therefore, the first condition of the S-reduction definition is also satisfied.
Thus, the $(f,g)$ is an S-reduction from k\textsc{LFS} to \textsc{MaxClique}. 

A set $B\subseteq[n]$ is a feasible solution of k\textsc{U} problem for instance $(\Gamma_s,\Gamma_t)$ if and only if $[n] \setminus B$ is a feasible solution of the k\textsc{LFS} problem for the same instance $(\Gamma_s,\Gamma_t).$  
Similarly if $B\subseteq V=[n]$ is a vertex cover of $G$ then $V\setminus B$ is a
clique for $G.$ Moreover, the minimal vertex cover $B$ for $G$ corresponds to $V\setminus B$ which is a maximum size clique of $G.$ 
Similarly, a  maximal longest fixedsubset $B$ of $(\Gamma_s,\Gamma_t)$ corresponds to $[n]\setminus B$ which is a minimal solution of the $k\textsc{U}$ problem. 
The S-reduction $(f',g')$ from \textsc{U} to \textsc{MinVertexCover} is defined by equations $f'=f$ 
and $g'((\Gamma_s,\Gamma_t),B)=[n] \setminus g((\Gamma_s,\Gamma_t),V \setminus B).$ The equality of the costs can be proven using the same reasoning as before.
\end{proof}

An $S$-reduction is stronger than $L$-reduction, $AP$-reduction and $PTAS$-reduction \cite{PierluigiGuide}, that are often used for proving the membership to the 
$APX$ class. Furthermore, if $A \leq_S B$ and $B$ is approximable with some factor, then $A$ is approximable with the same factor.
Therefore, by Lemma \ref{firstL} and the fact that \textsc{MinVertexCover} there exists $2$-approximation algorithm \cite{Vazirani}, we obtain that 
there is $2$-approximation algorithm for the k\textsc{U} problem. 

For verification if an $Y$ is feasible solution for an $x\in \mathcal{I}(k\textsc{LFS})$ one can iteratively check if consecutive 
elements of $\sigma^i_s$ are in $Y.$ These iteration puts consecutive elements to the new sequence $\tau^i_s$ if only they are in $Y.$
The same can be done for $\sigma^i_t,$ obtainig $\tau^i_s$ for all $i\in [k].$ At the end we check if $\sigma^i_t = \tau^i_s$ for each $i\in [k].$
This can be implemented within $O(kn\log (|Y|)).$ Then fasible solutions are verifiable in polynomial time. Since $cost_{k\textsc{LFS}}(Y)=|Y|,$  
$cost_{k\textsc{LFS}}$ is obviously computable in polynomial time. 

As a conclusion we obtain:

\begin{theorem}
\label{secondT}
The $k$\textsc{LFS} problem is in $NPO$ for any $k\in [n]$.  For any $k\in [n]$ there exists 2-approximation algorithm for k\textsc{U}.
\end{theorem}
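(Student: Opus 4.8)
The plan is to assemble Theorem~\ref{secondT} from two essentially independent pieces, both of which have already been prepared in the preceding discussion. For the first assertion, namely that $k\textsc{LFS} \in NPO$, I would verify the four conditions in the fourtuple description of an $NPO$ problem: the instance set $\mathcal{I}(k\textsc{LFS})$ is recognisable in polynomial time (one simply checks that $\Gamma_s, \Gamma_t$ are genuine $k$-tuples of permutations of $[n]$); the feasibility predicate $Y \in fso(\Gamma)$ is polynomial-time decidable; the cost function $cost_{k\textsc{LFS}}(\Gamma, Y) = |Y|$ is a positive integer computable in polynomial time; and $type = \max$. The only nonroutine point here is membership testing, and this is exactly what the paragraph immediately before the theorem establishes: given $Y \subseteq [n]$ one extracts, for each $i \in [k]$, the $Y$-induced subsequences $\tau_s^i$ of $\sigma_s^i$ and $\tau_t^i$ of $\sigma_t^i$ by a single left-to-right scan, and $Y$ is feasible iff $\tau_s^i = \tau_t^i$ for every $i$. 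This runs in $O(kn\log|Y|)$ time, hence polynomially in $|x|$, so I would just cite that argument.

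For the second assertion, that there is a $2$-approximation algorithm for $k\textsc{U}$, I would invoke Lemma~\ref{firstL} together with the structural property of $S$-reductions. Lemma~\ref{firstL} gives an $S$-reduction from $k\textsc{U}$ to \textsc{MinVertexCover}; by the quoted fact that $S$-reductions are stronger than $AP$- and $PTAS$-reductions and in particular transfer approximation factors verbatim (if $A \leq_S B$ and $B$ admits an $r(n)$-approximation then so does $A$), the classical $2$-approximation for \textsc{MinVertexCover} of~\cite{Vazirani} pulls back along $(f', g')$ to a $2$-approximation for $k\textsc{U}$. Concretely, on input $\Gamma$ one computes $G = f'(\Gamma)$, runs the vertex-cover $2$-approximation to obtain a cover $B$ of $G$, and outputs $g'(\Gamma, B) = [n] \setminus g(\Gamma, V \setminus B)$; since $cost_{k\textsc{U}}(\Gamma, g'(\Gamma,B)) = cost_{\textsc{MinVertexCover}}(G, B)$ and $Opt_{k\textsc{U}}(\Gamma) = Opt_{\textsc{MinVertexCover}}(G)$, the ratio is preserved.

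I would present the proof essentially as two short paragraphs mirroring the above, each consisting mostly of references back to the already-proven material. The main (and only real) obstacle is making sure the $NPO$ verification is stated cleanly: one must be slightly careful that the feasibility check described handles the degenerate cases $|Y| \in \{0, 1\}$ correctly and that the cost is genuinely a \emph{positive} integer as required by the $NPO$ definition --- this is harmless provided one treats the empty set as having cost $0$ consistently with the convention used for $U_k$, or restricts attention to the nonempty feasible solutions, which always exist since singletons are feasible. Everything else is a direct appeal to Lemma~\ref{firstL} and the transfer property of $S$-reductions, so no new ideas are needed.
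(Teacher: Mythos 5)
Your proposal is correct and follows essentially the same route as the paper: membership in $NPO$ is established by the polynomial-time feasibility check that extracts the $Y$-induced subsequences $\tau_s^i,\tau_t^i$ and compares them, and the $2$-approximation for $k$\textsc{U} is obtained by pulling back the classical \textsc{MinVertexCover} $2$-approximation along the $S$-reduction of Lemma~\ref{firstL}, using the fact that $S$-reductions preserve approximation factors. Your added remark about degenerate cases and positivity of the cost is a harmless refinement the paper glosses over.
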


\begin{lemma}\label{TwoSReductions}

There exists an $S$-reduction from the \textsc{MaxClique} problem to the $n$\textsc{LFS} problem  and from the \textsc{MinVertexCover} problem  to $n$\textsc{U} problem.  
\end{lemma}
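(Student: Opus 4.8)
The plan is to reverse the construction from Lemma \ref{firstL}: given a graph $G = (V, E)$ with $V = [n]$, I will build an instance $\langle \Gamma_s, \Gamma_t \rangle$ of $n$\textsc{LFS} over the ground set $[n]$ whose associated ``compatibility graph'' (in the sense of Lemma \ref{firstL}) is exactly $G$. Since a pair $(\sigma_s^r, \sigma_t^r)$ contributes an edge $(i,j)$ precisely when $i$ and $j$ appear in the same relative order in $\sigma_s^r$ and in $\sigma_t^r$, a single coordinate $r$ can only ever \emph{forbid} an edge. I therefore use one coordinate per non-edge: enumerate the non-edges $\{i_1, j_1\}, \ldots, \{i_m, j_m\}$ of $G$ (so $m = \binom{n}{2} - |E| \le \binom{n}{2} \le n$ when $k=n$ — here one must be slightly careful and pad with trivial coordinates, or observe $m \le n^2$ and note the reduction only needs $k \le n$... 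I will instead fix $k = n$ and, if there are fewer than $n$ non-edges, repeat a coordinate, and if there are more, the construction still works by allowing $k$ up to $n$; the cleanest route is to take exactly one coordinate that kills \emph{all} forbidden pairs simultaneously when possible, so let me reconsider). First I would, for each non-edge $\{i_\ell, j_\ell\}$, set $\sigma_s^{\ell}$ to be the identity-like permutation listing $i_\ell$ before $j_\ell$, and $\sigma_t^{\ell}$ to list $j_\ell$ before $i_\ell$, with all other elements placed so as not to create any additional order-inversions between pairs that \emph{are} edges of $G$; concretely, place every other element of $[n]$ in the same fixed order in both $\sigma_s^\ell$ and $\sigma_t^\ell$, and only swap $i_\ell, j_\ell$. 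Then the pair $(\sigma_s^\ell, \sigma_t^\ell)$ admits a common subsequence on a set $B$ iff $B$ omits at least one of $i_\ell, j_\ell$, i.e. iff $\{i_\ell, j_\ell\} \not\subseteq B$.

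Next I would verify that $B \subseteq [n]$ is a feasible solution of the resulting $n$\textsc{LFS} instance iff $B$ is a clique in $G$: feasibility means every coordinate $\ell$ admits a $B$-induced common subsequence, which by the previous paragraph means $B$ contains no non-edge of $G$, i.e. $B$ induces a complete subgraph. This establishes $f$ and shows $Opt_{n\textsc{LFS}}(f(G)) = Opt_{\textsc{MaxClique}}(G)$, and the map $g(G, C) = C$ (a clique is literally a feasible solution) gives $cost_{\textsc{MaxClique}}(G, C) = |C| = cost_{n\textsc{LFS}}(f(G), g(G,C))$ — both conditions of the $S$-reduction definition. For the second reduction, I reuse the dual bookkeeping already spelled out in the proof of Lemma \ref{firstL}: $B$ is a vertex cover of $G$ iff $V \setminus B$ is a clique iff $V \setminus B$ is feasible for $n$\textsc{LFS}$(f(G))$ iff $B$ is feasible for $n$\textsc{U}$(f(G))$; the minimum vertex cover corresponds to the smallest $n$\textsc{U}-solution, and the cost equality $|B| = |B|$ is immediate. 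So $f' = f$, $g'(G, B) = [n] \setminus g(G, V \setminus B)$ works verbatim.

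The main obstacle is the bookkeeping of the coordinate count, since the problem is named $n$\textsc{LFS} (exactly $k = n$ coordinates) while the number of non-edges can be as large as $\binom{n}{2}$. I expect to resolve this by a padding/merging argument: several non-edges incident to disjoint vertex pairs can be handled in a single coordinate simultaneously (a single permutation pair can forbid a whole matching of pairs at once by performing all those transpositions together), and by a standard greedy edge-colouring of the complement graph the $\binom{n}{2}$ non-edges partition into at most $n$ matchings, so $n$ coordinates suffice; conversely, if fewer coordinates are actually used one pads with the identity pair $(\mathrm{id}, \mathrm{id})$, which forbids nothing. Checking that merging several transpositions into one coordinate does not accidentally forbid an \emph{edge} of $G$ — i.e. that the only order-inversions created are exactly the intended non-edges — is the one place requiring genuine care, and I would handle it by noting that transposing disjoint pairs $\{i,j\}$ creates an inversion only for that pair and for no other pair of elements.
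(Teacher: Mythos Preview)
Your route is workable but differs from the paper's, and the two spots you yourself flag as needing care both contain genuine slips.

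The paper uses one coordinate per \emph{vertex}, not per non-edge. For each $i\in[n]$ it sets
\[
\sigma^i_s=(i)\cdot a_i\cdot b_i,\qquad \sigma^i_t=a_i\cdot(i)\cdot b_i,
\]
where $a_i$ lists the non-neighbours of $i$ in increasing order and $b_i$ lists the neighbours. Only the element $i$ moves between $\sigma^i_s$ and $\sigma^i_t$; a set $B$ containing $i$ induces a common subsequence in this coordinate iff $B$ meets no non-neighbour of $i$. Quantifying over all $i\in[n]$ gives exactly the clique condition. The coordinate count is $n$ on the nose, with no padding and no colouring argument.

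Your non-edge gadget can be made to work, but: (i) a \emph{greedy} edge-colouring of $\bar G$ only guarantees $2\Delta(\bar G)-1$ colours, which can exceed $n$; you need Vizing or the round-robin $1$-factorisation of $K_n$ to bound the number of matchings by $n$. (ii) The claim that ``transposing disjoint pairs $\{i,j\}$ creates an inversion only for that pair'' is false if you literally apply the transposition $(i\;j)$ to the identity when $j>i+1$: every element strictly between $i$ and $j$ becomes inverted with both endpoints. The correct realisation is to \emph{place} each pair adjacently, e.g.\ $\sigma_s^\ell=(i_1,j_1,i_2,j_2,\ldots,\text{rest})$ and $\sigma_t^\ell=(j_1,i_1,j_2,i_2,\ldots,\text{rest})$; since the pairs form a matching this is possible, and then only the intended inversions appear. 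With these fixes your argument is complete, but the paper's vertex-indexed construction avoids both detours entirely. The dual reduction to \textsc{MinVertexCover}/$n$\textsc{U} is handled identically in both approaches.
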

\begin{proof}
Let $G=(V,E)$ be undirected graph, where $V=[n].$ We define $\Gamma=\langle \Gamma_{s},\Gamma_{t} \rangle,$
where $\Gamma_{s}=(\sigma_s^1,\ldots,\sigma_s^n)$ and $\Gamma_{t}=(\sigma_t^1,\ldots,\sigma_t^n).$ Firstly define permutation $\sigma^{i}_{s}.$ Let $V^{-i}= V\setminus \{i\}.$ Split $V^{-i}$ into two disjoin sets 
$P^i=\{j\in V^{-i}\;|\;jEi\}$ and  $N^i=\{j\in V^{-i}\;|\; \neg jEi\}.$

A sequence $a$ is \textit{non repeating} if there is no element in $a$ occuring twice.
Let $a_i$ and $b_i$ be a increasing non repeating sequences of all elements from $N^i$ and $P^i,$ respectively. Define $\sigma^{i}_{s}$ as $(i)\cdot a_i \cdot b_i.$ 
The second permutation $\sigma^{i}_{t}$ is defined as $ a_i \cdot (i) \cdot b_i.$

\begin{remark}
	\label{RFour}
	There is a clique $K$ in the graph $G$ if and only if  $K$ is feasible solution of $\langle \Gamma_{s},\Gamma_{t}\rangle.$ 
\end{remark}	
\begin{remark}
\label{RFive}
If the size of maximal clique in graph $G$ equals $m,$ then $\langle \Gamma_{s},\Gamma_{t} \rangle$ has no feasible solution of length greater than $m.$
\end{remark}

The reduction is defined in the following way: $f(G)=\Gamma,$ $g(G,K)=K.$  By Remark \ref{RFour}, for given 
a feasible solution $K$ of the $n$\textsc{LFS} problem for instance $\Gamma,$ $g(G,K)$ is feasible solution of the \textsc{MaxClique} problem for instance $G.$ 
Then $g$ is correctly defined. By Remarks \ref{RFour} and \ref{RFive}, $Opt_{\textsc{MaxClique}}(G)=Opt_{n\textsc{LFS}}(\Gamma).$
For \textsc{MaxClique} as well as $n$\textsc{LFS} cost of feasible solution is the size of a set which constitutes the solution.
Hence $cost_{n\textsc{LFS}}(\Gamma,K)=cost_{\textsc{MaxClique}}(G,g(G,K))=|K|.$

Like in the proof of Lemma \ref{firstL} presented $S$-reduction is also a reduction from \textsc{MinVertexCover} to $n$\textsc{U}.
This is because the graph $G$ which is an instance of the \textsc{MinVertexCover} problem is also an instance of the \textsc{MaxClique} problem.
A maximum solution $B$ of $\Gamma$ which is an instance of the $n$\textsc{LFS} problem corresponds to $[n]\setminus B$ which is a minimum solution 
of the same $\Gamma$ which is an instance of the $n$\textsc{U} this time. 
\end{proof}

\subsection{Np-harness and nonapproximability}

\begin{theorem}
\label{firstT}
The 2\textsc{LFSD} and 2\textsc{UD} are \textsc{NP}-complete.
\end{theorem}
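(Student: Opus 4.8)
The plan is, first, to place both problems in \textsc{NP}: a set $B\subseteq[n]$ is a polynomial‑size certificate, because, as noted just before Theorem~\ref{secondT}, one can decide in $O(n\log n)$ time whether $B$ induces a common subsequence of a single pair $(\sigma^i_s,\sigma^i_t)$, hence in polynomial time whether $B$ is feasible for $2$\textsc{LFS}; for $2$\textsc{UD} one checks instead that $[n]\setminus B$ is feasible for $2$\textsc{LFS}, which by the definition of the dual problem amounts to the same verification. Comparing $|B|$ with the input integer $l$ then settles the decision.

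For \textsc{NP}-hardness of $2$\textsc{LFSD} I would reduce from $3$\textsc{Sat}. Given a formula $\varphi$ with clauses $C_1,\dots,C_c$ (we may assume each clause has three literal occurrences), let $G$ be the standard literal-consistency graph on the $3c$ vertices $\{(j,\ell)\;:\;\ell\text{ a literal of }C_j\}$, with $(j,\ell)\sim(j',\ell')$ iff $j\neq j'$ and $\ell,\ell'$ are not complementary; classically $G$ has a clique of size $c$ iff $\varphi$ is satisfiable, and no clique of $G$ is larger (each clause triple is independent in $G$). The crux is that $G$ is the intersection of two permutation graphs. Write $\overline{G}=H_a\cup H_b$, where $H_a$ is the disjoint union of the $c$ triangles carried by the clause triples and $H_b$ is the disjoint union, over the variables $x_i$, of the complete bipartite graphs between the positive and the negative occurrences of $x_i$. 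Permutation graphs contain all complete graphs and all complete bipartite graphs and are closed under disjoint union and under complementation, so $H_a$, $H_b$ and $H_1:=\overline{H_a}$, $H_2:=\overline{H_b}$ are permutation graphs, and $H_1\cap H_2=G$. Identify $V(G)$ with $[3c]$ and, for $r=1,2$, pick a permutation $\sigma^r_t\in S_{3c}$ whose inversion graph is $\overline{H_r}$ (possible since $\overline{H_r}$ — namely $H_a$ for $r=1$ and $H_b$ for $r=2$ — is a permutation graph; e.g.\ $\sigma^1_t=(3,2,1,6,5,4,\ldots)$). Put $\Gamma_s=(\mathrm{id},\mathrm{id})$ and $\Gamma_t=(\sigma^1_t,\sigma^2_t)$. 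In the graph the proof of Lemma~\ref{firstL} associates to $\langle\Gamma_s,\Gamma_t\rangle$, the edges are exactly the pairs occurring in the same relative order in $\mathrm{id}$ and $\sigma^r_t$ for both $r$, i.e.\ the pairs in $\overline{\mathrm{inv}(\sigma^1_t)}\cap\overline{\mathrm{inv}(\sigma^2_t)}=H_1\cap H_2=G$; so that graph is $G$, and hence — directly from its definition — the feasible sets of $2$\textsc{LFS} on $\langle\Gamma_s,\Gamma_t\rangle$ are precisely the cliques of $G$. Thus $\langle\Gamma_s,\Gamma_t\rangle$ has a feasible set of size $c$ iff $\varphi$ is satisfiable, and since the whole construction is polynomial, $2$\textsc{LFSD} is \textsc{NP}-complete.

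For $2$\textsc{UD} I would argue from the duality built into its definition: on the instance $\langle\Gamma_s,\Gamma_t\rangle$ above, $B$ is feasible for $2$\textsc{U} iff $[3c]\setminus B$ is feasible for $2$\textsc{LFS}, and since feasibility for $2$\textsc{LFS} is closed under taking subsets (a subsequence of a common subsequence is a common subsequence), feasibility for $2$\textsc{U} is closed under taking supersets; hence the instance has a $2$\textsc{U}-feasible set of size $3c-c=2c$ iff it has a $2$\textsc{LFS}-feasible set of size at least $c$, i.e.\ iff $\varphi$ is satisfiable, which together with membership in \textsc{NP} gives \textsc{NP}-completeness of $2$\textsc{UD}. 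The one step calling for genuine care is the middle one — checking that $H_a,H_b,H_1,H_2$ are permutation graphs, exhibiting the diagrams $\sigma^1_t,\sigma^2_t$, and tracking the complementations so that the Lemma~\ref{firstL} graph comes out equal to $G$ on the nose rather than merely sharing its optimum; the $3$\textsc{Sat}-to-clique reduction and the \textsc{NP}-membership check are routine.
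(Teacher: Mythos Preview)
Your argument is correct and, at the level of the construction, is the same reduction the paper gives: both encode a 3CNF on its $3m$ literal occurrences, with one permutation pair enforcing ``at most one literal per clause'' (your $H_a$; the paper's blocks $E_j$ versus $E_j^R$) and the other enforcing consistency between a variable's positive and negative occurrences (your $H_b$; the paper's blocks $A_iB_i$ versus $B_iA_i$), and both finish 2\textsc{UD} by the built-in duality. The only real difference is packaging: you factor through the classical 3\textsc{Sat}$\to$\textsc{Clique} reduction, observe that the complement of the consistency graph decomposes as a union of two permutation graphs, and then invoke Lemma~\ref{firstL} to identify feasible sets with cliques, whereas the paper writes the four permutations down explicitly and verifies the required properties (Remarks~\ref{pierwszapara}--\ref{trzeciapara}) by hand; your framing explains conceptually \emph{why} two coordinates suffice, while the paper's is more self-contained.
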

\begin{proof}

By the existence of an $S$-reduction (Lemma \ref{firstL}) from k\textsc{LFS} to \textsc{MaxClique} and from k\textsc{U} to \textsc{MinVertexCover}, 
we get that for any $k\in [n],$ the k\textsc{LFSD} and the k\textsc{UD} are reducible to the \textsc{Clique} 
problem and the k\textsc{UD} problem, respectively, by polynomial time reductions. Hence $k\textsc{LFS}$ and k\textsc{UD} are in $NP$  and in particular $2\textsc{LFS}$ and k\textsc{UD} are in $NP.$

We let remains to show that 2\textsc{LFSD} is in \textsc{NP}-hard. Let $\varphi=c_1\wedge \ldots \wedge c_m$ be a boolean formula in $3CNF$ form,
where $3CNF$ means that each clause $c_i$ has three literals. Let $x_1,\ldots, x_n$ be all variables that occur in $\varphi.$ 
Assume that variable $x_i$ occurrs $\alpha$ times in $\varphi$ as a positive literal and $\beta$ times as negative one. We construct polynomial time 
reduction by encoding a satisfiability of $\varphi$ as 2\textsc{LFSD}. We introduce a new set of symbols 
$x_i^{(1)},\ldots,x_i^{(\alpha)}, \neg x_i^{(1)},\ldots, \neg x_i^{(\beta)},$ where  $x_i^{(j)}$
stands for the $j$-th positive literal of $x_i$ and symbol $\neg x_i^{(j)}$ stands for the $j$-th negative literal of $x_i.$ 
Let $\mathcal{C}_i=\{x_i^{(1)},\ldots,x_i^{(\alpha)}, \neg x_i^{(1)},\ldots, \neg x_i^{(\beta)}\}$ and $\mathcal{C}=\bigcup_{i=1}^n \mathcal{C}_i.$
Let $\kappa$ be the bijection from $\mathcal{C}$ onto $[3m].$ Create $\Gamma(\varphi)=\langle \Gamma_s,\Gamma_t\rangle =$ $\langle (\sigma^1_s,\sigma^2_s),(\sigma^1_t,\sigma^2_t)\rangle$ 
which is an instance of $2$\textsc{LFSD} having solution $m$ if and only if $\varphi$ is satisfiable. 

The definition of $\Gamma(\varphi) = \langle \Gamma_s,\Gamma_t\rangle$ in the presented reduction should preclude the case
that if $Z$ induces common subsequences for $(\sigma^1_s,\sigma^1_t)$ and $(\sigma^2_s,\sigma^2_t)$
then both $\kappa(x_i^{(r)})$ and $\kappa(\neg x_i^{(p)})$ appear in $Z$ for any $i\in [n]$ and $r,p.$ It can be realized in the following way:
Permutations $\sigma_{s}^{1}$ and $\sigma_{t}^{1}$ consist of blocks of substrings $A_i=\left(\kappa(x_i^{(1)}),\ldots,\kappa(x_i^{(\alpha)})\right)$ and 
$B_i=\left(\kappa(\neg x_i^{(1)}),\ldots, \kappa(\neg x_i^{(\beta)})\right).$ We define $\sigma_{s}^1=$ $A_1B_1A_2B_2\cdots A_nB_n$ and
$\sigma_{t}^1=B_1A_1B_2A_2\cdots B_nA_n.$

\begin{remark} \label{pierwszapara}
If a set $Z$ induces a common subsequence of $\sigma_{s}^1$ and $\sigma_{t}^1$
then at most one of $\kappa(x_i^{(r)})$ and $\kappa(\neg x_i^{(p)})$ is in $Z$ because $(1)$ $\kappa(x_i^{(r)})$ appears before $\kappa(\neg x_i^{(p)})$ in $\sigma_{s}^1,$ $(2)$ $\kappa(x_i^{(r)})$ appears after $\kappa(\neg x_i^{(p)})$ in $,\sigma_{t}^1$
for any $i,r,p.$
\end{remark}

Let $c_j=x_{j1}^{(a)}\vee x_{j2}^{(b)} \vee x_{j3}^{(c)}$ be the $j$-th clause in $\varphi,$ where 
$x_{j1}^{(a)}, x_{j2}^{(b)}, x_{j3}^{(c)}$ are literals.
Define a permutation $\sigma^{2}_{s}$ as concatenation of blocks $E_1E_2\cdots E_m$ where $E_j = (\kappa(x_{j1}^{(a)}),\kappa(x_{j2}^{(b)}),\kappa(x_{j3}^{(c)}))$ and $\sigma^{2}_{t}=E^R_1E^R_2\cdots E^R_m,$ where $E^R_j = (\kappa(x_{j3}^{(c)}),\kappa(x_{j2}^{(b)}),\kappa(x_{j1}^{(a)})).$

\begin{remark} \label{drugapara}
If a set $Z$ induces a common subsequence of $\sigma^{2}_{s}$ and $\sigma^{2}_{t}$ then $|Z| \leq m.$
If $|Z|\geq m + 1$, there would exist literals $x,y$ belonging to the same clause such that $\kappa(x),\kappa(y) \in Z.$ 
Literals $\kappa(x),\kappa(y)$ that occur in $\sigma^{2}_{t},$  appears in reverted order in $\sigma^{2}_{t}.$ 
Hence $Z$ does not induce a common subsequence for both $\sigma^{2}_{s}$ and $\sigma^{2}_{t}.$
\end{remark}
Any choice of literals, one from every clause forms the sequence, which is \textit{witness of satisfiability} of $\varphi$ if the set of chosen literals is consistent.

\begin{remark}\label{trzeciapara}
Assume that $\varphi$ is satisfiable and the set $\{y_1,\ldots, y_m\}$ is a witness of its satisfiability.
Then the sequence $(\kappa(y_1),\kappa(y_2),\cdots,\kappa(y_m))$ is a common subsequence of $\sigma^{2}_{s}$ and $\sigma^{2}_{t}.$
\end{remark}

Note that if $y_1,\ldots, y_m$ is a witness of satisfiability then $\{\kappa(y_1),\kappa(y_2),\ldots \kappa(y_m)\}$ indicates a common subsequence of $\sigma^{1}_{s}$ and $\sigma^{1}_t.$
Indeed, by consistency of $\{y_1,\ldots,$ $ y_m\}$ there is no $i,j,k$ such that $\kappa(y_i)$ occurs in $A_k$ and $\kappa(y_j)$ occurs in $B_k.$ 
Hence, $\kappa(y_i)$ and $\kappa(y_j)$ occur in the same order in both $\sigma^{1}_{s}$ and $\sigma^{1}_{t}.$ Therefore, by Remark \ref{trzeciapara}, 
$\{\kappa(y_1),\kappa(y_2),\ldots \kappa(y_m)\}$ induces common subsequences for pairs $(\sigma^{1}_{s},\sigma^{1}_{t})$ and $(\sigma^{2}_{s},\sigma^{2}_{t})$ of length $m.$ By Remarks \ref{pierwszapara} and \ref{drugapara} the lengths of induced subsequences are not greater than $m.$
Thus, the induced subsequences are of the maximal length.

By the above reduction $\varphi$ is satisfiable iff the solution of $\langle \Gamma_s,\Gamma_t\rangle,$ as the instance of the the 2\textsc{UD} problem 
is of the size not greater than $2m.$ Hence 2\textsc{UD} is $NP$-hard.
\end{proof}
For a given boolean formula $\varphi$ in $3CNF,$  the \textsc{MAX-3SAT} problem consists in finding a maximal possible number of clauses that can be 
satisfied in $\varphi.$ 
Note that the reduction that has been presented in the previous proof is an $S$-reduction from the \textsc{MAX-3SAT} problem to the $2$\textsc{LFS} problem. 
In the reduction we use the $3CNF$ with exactly three literals in every clause. It has been proved in \cite{Hastad01Some} that \textsc{MAX-3SAT}
is inapproximmable with factor better than $\frac{7}{8}.$ Since our reduction is an $S$-reduction
also $2$\textsc{LFS} cannot have a better approximation. Inapproximability result can easily be generalized to the $k$\textsc{LFS} for $k\leq 2,$
because it is enough to repeat construction with $\sigma^j_s=\sigma^j_s$ for $j>2.$

Analyzing the same reduction and using the same inapproximability result \cite{Hastad01Some}, it can be proven that there is no approximation
for the $2$\textsc{U} problem with an approximation factor better than $\frac{41}{40}$ unless $P=NP.$

\begin{theorem}
\label{LowApp}
There is no polynomial time approximation algorithm for the $2$\textsc{U} problem with approximation factor better than $1 + \frac{1}{40}$ unless $P=NP.$
\end{theorem}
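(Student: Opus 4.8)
The plan is to reuse the reduction $\varphi \mapsto \Gamma(\varphi) = \langle \Gamma_s, \Gamma_t\rangle$ built in the proof of Theorem~\ref{firstT}, together with the observation made right after that proof that this reduction is in fact an $S$-reduction from \textsc{MAX-3SAT} to $2$\textsc{LFS}. Two features of $\Gamma(\varphi)$ will be used. First, its ground set is $\kappa(\mathcal{C}) = [3m]$, where $m$ is the number of clauses of $\varphi$: each clause has exactly three literals, so there are exactly $3m$ literal occurrences and $|\mathcal{C}| = 3m$. Second, because the reduction is an $S$-reduction it preserves optima, i.e.\ $Opt_{2\textsc{LFS}}(\Gamma(\varphi)) = Opt_{\textsc{MAX-3SAT}}(\varphi)$. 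Since a set $B\subseteq[3m]$ is feasible for $2$\textsc{U} on $\Gamma(\varphi)$ exactly when $[3m]\setminus B$ is feasible for $2$\textsc{LFS}, and $cost_{2\textsc{U}}(\Gamma,B)=|B|$, this yields the key identity
\[
Opt_{2\textsc{U}}(\Gamma(\varphi)) \;=\; 3m - Opt_{2\textsc{LFS}}(\Gamma(\varphi)) \;=\; 3m - Opt_{\textsc{MAX-3SAT}}(\varphi).
\]

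Next I would transport H\aa{}stad's gap across this identity. For every fixed $\epsilon>0$ it is \textsc{NP}-hard to distinguish formulas $\varphi$ with $Opt_{\textsc{MAX-3SAT}}(\varphi)=m$ from formulas with $Opt_{\textsc{MAX-3SAT}}(\varphi)\le(\tfrac78+\epsilon)m$; by the displayed equality this is the same as distinguishing $Opt_{2\textsc{U}}(\Gamma(\varphi))=2m$ from $Opt_{2\textsc{U}}(\Gamma(\varphi))\ge(\tfrac{17}{8}-\epsilon)m$. Now assume, for contradiction, a polynomial-time $\rho$-approximation $\mathcal{A}$ for the minimization problem $2$\textsc{U} with $\rho<1+\tfrac1{40}$. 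Since $2\rho<\tfrac{41}{20}<\tfrac{17}{8}$, fix $\epsilon>0$ with $2\rho<\tfrac{17}{8}-\epsilon$. On a satisfiable $\varphi$ we get $\mathcal{A}(\Gamma(\varphi))\le\rho\cdot Opt_{2\textsc{U}}(\Gamma(\varphi))=2\rho m<(\tfrac{17}{8}-\epsilon)m$, whereas on a $\varphi$ far from satisfiable $\mathcal{A}(\Gamma(\varphi))\ge Opt_{2\textsc{U}}(\Gamma(\varphi))\ge(\tfrac{17}{8}-\epsilon)m$. Hence the single threshold $(\tfrac{17}{8}-\epsilon)m$ would let $\mathcal{A}$ decide the H\aa{}stad gap problem in polynomial time, forcing $P=NP$; contrapositively, no such $\mathcal{A}$ exists unless $P=NP$. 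The same padding trick as before, taking $\sigma^j_s=\sigma^j_t$ for $j>2$, extends the statement to $k$\textsc{U} for every $k\ge 2$.

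I expect the only real point of care to be the bookkeeping that turns the \emph{additive} complementation $Opt_{2\textsc{U}}=3m-Opt_{2\textsc{LFS}}$ into a \emph{multiplicative} inapproximability statement: the constant-factor gap for the maximization problem $2$\textsc{LFS} degrades to a smaller constant-factor gap for the minimization problem $2$\textsc{U}, and one has to check that what survives still strictly beats $1+\tfrac1{40}$ and tolerates the $\epsilon$ slack in H\aa{}stad's theorem, which it does comfortably since $\tfrac{17}{8}/2=\tfrac{17}{16}>\tfrac{41}{40}$. Beyond that, the write-up only needs to invoke two facts already established: the completeness side gives exactly $2m$ because the ground set has size $3m$ and $Opt_{2\textsc{LFS}}(\Gamma(\varphi))=m$ on satisfiable $\varphi$ (Remarks~\ref{pierwszapara}--\ref{trzeciapara}), and the soundness side needs $Opt_{2\textsc{LFS}}(\Gamma(\varphi))=Opt_{\textsc{MAX-3SAT}}(\varphi)$, i.e.\ that the reduction is an $S$-reduction and not merely a polynomial-time many-one reduction.
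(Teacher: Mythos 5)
Your proof is correct, and it reaches the stated bound by a route that differs from the paper's in how the hardness is transferred. Both arguments use the same reduction $\varphi\mapsto\Gamma(\varphi)$ from Theorem~\ref{firstT}, the same complementation identity $Opt_{2\textsc{U}}(\Gamma(\varphi))=3m-Opt_{\textsc{Max\mbox{-}3SAT}}(\varphi)$, and the same source of hardness (H\aa{}stad). The paper, however, runs a ratio-transfer argument valid on \emph{every} instance: writing $Opt_{2\textsc{U}}=2m+k$ with $opt_\varphi=m-k$, it converts a $(1+\tfrac{1}{40}-\epsilon)$-approximation for $2$\textsc{U} into a $(\tfrac78+\epsilon)$-approximation for \textsc{Max-3SAT} by taking complements, and to make the arithmetic close it must invoke the extra fact that $m-k\ge m/2$ and $k\le m/2$; the constant $\tfrac{1}{40}$ is exactly what survives this worst case $k=m/2$. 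You instead invoke the gap (promise-problem) form of H\aa{}stad's theorem with perfect completeness --- satisfiable versus at most $(\tfrac78+\epsilon)m$ satisfiable --- and push the gap through the identity to get $2m$ versus $(\tfrac{17}{8}-\epsilon)m$ for $2$\textsc{U}, after which a single threshold decides the promise problem. This sidesteps the half-satisfiability bookkeeping entirely, and because the hard instances are satisfiable ($k=0$) your argument in fact establishes the stronger bound that no $(\tfrac{17}{16}-\delta)$-approximation exists for any $\delta>0$, of which the claimed $1+\tfrac{1}{40}$ is a weaker corollary ($\tfrac{17}{16}>\tfrac{41}{40}$). The only hypotheses you lean on beyond the paper's own text are that the reduction preserves optima (the $S$-reduction observation made after Theorem~\ref{firstT}, which the paper's proof also uses implicitly via the equivalence of $2m+k$ and $m-k$) and that H\aa{}stad's gap holds with perfect completeness for exact-3CNF, which it does; so there is no gap in your argument.
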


\begin{proof}
	Assume that there is a polynomial time approximation algorithm $\mathcal{A}$ for the $2$\textsc{U} problem with the factor $1 + \frac{1}{40} - \epsilon$ for some $\frac{1}{40} > \epsilon > 0.$ 
	Let $\mathcal{A}(\Gamma(\varphi))$ be a solution for the instance $\langle \Gamma_s, \Gamma_t\rangle$ of the $2$\textsc{U} problem which was created in the proof of Theorem \ref{firstT}. 
	Denote by $|\mathcal{A}(\Gamma(\varphi))|$ the size of the set returned by $\mathcal{A}$ on the input $\Gamma(\varphi).$ 
	Let $2m+k$ be the optimal solution of the $2$\textsc{U} for an instance $\Gamma(\varphi)=\Gamma(\varphi),$ where $m$ is the number of clauses in $\varphi$ which was encoded in $\Gamma(\varphi)$ in the proof of Theorem  \ref{firstT}. 
	$2m+k$ is the size of the smallest feasible solution for $\Gamma(\varphi)$ if and only if $m-k$ is the maximal number of clauses that can be satisfied in $\varphi.$ Let $opt_\varphi=m-k$ We have that
	\begin{eqnarray}
	\label{estimation1}
	|\mathcal{A}(\Gamma(\varphi))| & \leq & (2m+k)\left(1+\frac{1}{40} - \epsilon\right)
	\end{eqnarray}
	Let $\mathcal{A}'(\Gamma(\varphi))$ be an algorithm that executes $\mathcal{A}$ on the input $\Gamma(\varphi)$ and returns $[3m]\setminus Z,$ whereas $\mathcal{A}$ returns $Z$ on the input $\Gamma(\varphi).$ 
	$[3m]\setminus Z$ is a feasible solution for instance $\Gamma(\varphi)$ of the problem $2$\textsc{LFS}
	as well as the $\kappa^{-1}(Z)$ is the satifiablity witness for at least $|[3m]\setminus Z|$ clauses of $\varphi.$ Since $\mathcal{A}(\Gamma(\varphi))=Z$ and 
	by inequality (\ref{estimation1}) we have:
	\begin{eqnarray}
	\label{estimation15}
	|\mathcal{A'}(\Gamma(\varphi))|=|[3m]\setminus Z| & \geq & (m - k) - \left(\frac{1}{40} -\epsilon \right) (2 m + k)
	\end{eqnarray}
	\begin{eqnarray}
	\frac{1}{8} - \frac{2}{40} & = &  \frac{3}{40} \nonumber \\
	\label{estimation2}
	\left( \frac{1}{8} - \frac{2}{40} \right) (m-k) \geq \left( \frac{1}{8} - \frac{2}{40} \right) \left(\frac{1}{2} m \right)& = & \frac{3}{40} \left(\frac{1}{2} m \right) \geq \frac{3}{40} k
	\end{eqnarray}
	If $\varphi$ is a formula in the $3CNF$ form, then there exists an assignment that satisfies at least half of the $\varphi$ clauses. We use this fact to obtain
	inequalities $(\ref{estimation2}).$ Indeed, $m-k$ is the maximal number of clauses of $\varphi$ which can be satisfied, then $m-k\geq 1/2 m$ and 
	$k$ denotes the number of clauses satisfied in the optimal assignment, then $1/2m \geq k$.
	Therefore
	\begin{eqnarray}
	\left( \frac{1}{8} - \frac{2}{40} \right) (m-k)& \geq & \frac{3}{40} k \nonumber\\
	\frac{1}{8} (m-k) & \geq &  \frac{2}{40} (m-k) + \frac{3}{40} k \nonumber \\
	(m-k) - \frac{7}{8} (m-k) & \geq & \frac{1}{40} (2m+k) \nonumber \\
	\label{estimation3}
	(m-k) - \frac{1}{40} (2m+k) & \geq & \frac{7}{8} (m-k)
	\end{eqnarray}
	It is easy to note that
	\begin{eqnarray}
	\label{estimation4}
	\epsilon (2m+k)& \geq & \epsilon (m-k)
	\end{eqnarray}
	Adding inequalities (\ref{estimation3}) and (\ref{estimation4}) by sides we obtain:
	\begin{eqnarray}
	\label{estimation5}
	(m-k) - \frac{1}{40} (2m+k)+\epsilon (2m+k)& \geq & \frac{7}{8} (m-k) + \epsilon (m-k) \nonumber\\
	(m-k) - \left( \frac{1}{40} - \epsilon \right) (2m+k)& \geq & \left(\frac{7}{8} + \epsilon \right)(m-k)=\left(\frac{7}{8} + \epsilon \right)opt_{\varphi}
	\end{eqnarray}
	
	The left side of inequality (\ref{estimation5}) is equal to the right side of inequality (\ref{estimation15}). Thus
	
	\begin{eqnarray*}
	|\mathcal{A'}(\Gamma(\varphi))|=|[3m]\setminus Z| & \geq & \left(\frac{7}{8} + \epsilon \right) opt_{\varphi}
	\end{eqnarray*}
	
	We have shown that $\mathcal{A'}$ is an approximation algorithm for \textsc{Max-3SAT} with an approximation factor $\frac{7}{8} + \epsilon$ 
	for some $ \frac{1}{40} > \epsilon > 0.$ By the result from \cite{Hastad01Some} it is possible only under assumption that $P=NP.$ Hence our assumption 
	about the existence of an approximmation algorithm with a factor $1+\frac{1}{40}$ for the $2$\textsc{U} problem can be true only under the assumption that $P=NP.$
	\end{proof}

By the Lemma \ref{TwoSReductions}, $\textsc{MaxClique} \leq_S n\textsc{LFS}.$ Therefore there exists an $S$-reduction $(f,g)$ between that problems.
We can ask about existence of the solution of the size $l$ for $x' \in \mathcal{I}($n$\textsc{LFS}).$ Let $A_{LFS}$ be an algorithm solving that question for any $x \in \mathcal{I}($n$\textsc{MaxClique}).$ 
One can build an algorithm $A_{MaxC}$ for the \textsc{MaxClique} problem. We do this in the following steps: Count in polynomial time $x'=f(x).$ Ask if there is a solution for instance $x'$ of the size $l.$
Return answer 'yes' if and only if the given anwser is 'yes'. Such an algorithm is correct by definion of S-reduction: existence of a feasible solution $y'$ for $x'$ implies existence existence of feasible solution 
$y=g(x,y')$ with the same size $l.$ This is by condition $cost_{\textsc{MaxClique}}(x,g(x,y'))=cost_{n\textsc{LFS}}(x',y').$ Therefore, $n$\textsc{LFS} is $NP$-hard.
Moreover, by H{\aa}stad and Zuckerman results \cite{Hastad96}, \cite{Zuckerman07}, which establishes that approximation of \textsc{MaxClique} within factor $n^{1-\epsilon}$
is \textsc{NP}-hard, we obtain the following nonapproximability theorem:

\begin{theorem}
\label{rGTheorem}
The approximation of the $n$\textsc{LFS} problem within $n^{1-\epsilon}$ is \textsc{NP}-hard, for any $\epsilon > 0.$
\end{theorem}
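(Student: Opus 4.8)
The plan is to transport the classical inapproximability of \textsc{MaxClique} through the $S$-reduction of Lemma~\ref{TwoSReductions}. As noted in the paragraph following Lemma~\ref{firstL}, an $S$-reduction $A \leq_S B$ transfers approximation factors verbatim: any $\rho$-approximation for $B$ yields a $\rho$-approximation for $A$, because $Opt_A(x) = Opt_B(f(x))$ and $cost_A(x, g(x,y')) = cost_B(f(x), y')$ for every feasible $y'$. So it suffices to check that the $S$-reduction $(f,g)$ from \textsc{MaxClique} to $n$\textsc{LFS} supplied by Lemma~\ref{TwoSReductions} is also size-preserving in the parameter that governs the \textsc{MaxClique} hardness bound, namely the number of vertices of the input graph.

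First I would observe that $f$ indeed has this property: it maps a graph $G=(V,E)$ with $V=[n]$ to an instance $\Gamma = f(G) = \langle \Gamma_s,\Gamma_t\rangle$ of $n$\textsc{LFS} whose ground set is exactly $[n]$. (The tuples $\Gamma_s,\Gamma_t$ each consist of $n$ permutations of $[n]$, so the encoding has polynomial length, but the universe over which feasible solutions are sought is $V$ itself.) Hence the symbol ``$n$'' in the claimed $n^{1-\epsilon}$ bound for $n$\textsc{LFS} is literally $|V|$.

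Next, suppose for contradiction that for some $\epsilon>0$ there is a polynomial-time algorithm $\mathcal{A}$ approximating $n$\textsc{LFS} within factor $n^{1-\epsilon}$. Define $\mathcal{A}'$ for \textsc{MaxClique} by: on input $G$ with $|V|=n$, compute $\Gamma=f(G)$ in polynomial time, run $\mathcal{A}$ on $\Gamma$ to obtain a feasible solution $Y\subseteq[n]$, and output $g(G,Y)=Y$. By Remark~\ref{RFour} the set $Y$ is a clique of $G$, and the $S$-reduction identities give
\[
\frac{Opt_{\textsc{MaxClique}}(G)}{|g(G,Y)|} \;=\; \frac{Opt_{n\textsc{LFS}}(\Gamma)}{cost_{n\textsc{LFS}}(\Gamma,Y)} \;\leq\; n^{1-\epsilon},
\]
so $\mathcal{A}'$ is a polynomial-time $n^{1-\epsilon}$-approximation for \textsc{MaxClique} on $n$-vertex graphs. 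By the theorems of H{\aa}stad and Zuckerman~\cite{Hastad96},~\cite{Zuckerman07} this is impossible unless $P=NP$; equivalently, approximating $n$\textsc{LFS} within $n^{1-\epsilon}$ is \textsc{NP}-hard.

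The only delicate point I anticipate is this parameter bookkeeping --- verifying that the ``$n$'' of the source \textsc{MaxClique} instance and the ``$n$'' of the target $n$\textsc{LFS} instance coincide, so that translating the hardness bound incurs no loss in the exponent. This holds precisely because $f$ keeps the ground set equal to $V$; everything else is the routine factor-preserving property of $S$-reductions recorded right after Lemma~\ref{firstL}.
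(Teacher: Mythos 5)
Your proposal is correct and follows essentially the same route as the paper: invoke the $S$-reduction from \textsc{MaxClique} to $n$\textsc{LFS} of Lemma~\ref{TwoSReductions}, use the factor-preserving property of $S$-reductions, and transfer the H{\aa}stad--Zuckerman $n^{1-\epsilon}$ bound. Your explicit check that the ground set $[n]$ of the produced $n$\textsc{LFS} instance coincides with the vertex set of $G$, so no loss occurs in the exponent, is a point the paper leaves implicit but is exactly the right thing to verify.
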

Since there is an $S$-reduction from \textsc{MinVertexCover} to $n$\textsc{U}, the lower bound for the approximation factor of \textsc{MinVertexCover} is also
the lower bound for the approximation factor of $n$\textsc{U}. As a conclusion from Khot and Regev result \cite{KhotRegev} we obtain:

\begin{theorem}
	\label{UGC2epsilon}
	There exists no polynomial time $(2-\epsilon)$-approximation algorithm for the $n$\textsc{U} problem unless
	The Unique Game Conjecture is not true.
\end{theorem}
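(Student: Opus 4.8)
The plan is to derive the statement directly from Lemma~\ref{TwoSReductions} and the Khot--Regev inapproximability bound for \textsc{MinVertexCover}. Lemma~\ref{TwoSReductions} supplies an $S$-reduction $(f',g')$ from \textsc{MinVertexCover} to $n$\textsc{U}; recall that, as observed right after Lemma~\ref{firstL}, if $A \leq_S B$ and $B$ admits an approximation with some factor then so does $A$ with exactly the same factor. The reason is built into the definition of an $S$-reduction: $Opt_{\textsc{MinVertexCover}}(G)=Opt_{n\textsc{U}}(f'(G))$ and $cost_{\textsc{MinVertexCover}}(G,g'(G,y'))=cost_{n\textsc{U}}(f'(G),y')$ for every feasible $y'$, so the ratio between the cost of a returned solution and the optimum is unchanged when one pulls a solution back along $g'$.

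Concretely I would argue by contradiction. Suppose $\mathcal{B}$ is a polynomial-time $(2-\epsilon)$-approximation algorithm for $n$\textsc{U} for some fixed $\epsilon>0$. Define an algorithm $\mathcal{B}'$ for \textsc{MinVertexCover}: on a graph $G=(V,E)$ with $|V|=n$, compute in polynomial time the instance $\Gamma=f'(G)$ (the $n$-tuple of permutation pairs built in the proof of Lemma~\ref{TwoSReductions}), run $\mathcal{B}$ on $\Gamma$ to get a feasible solution $Z$ of $n$\textsc{U}, and output $g'(G,Z)$, which by construction of the $S$-reduction is a vertex cover of $G$. By the optimum- and cost-preservation properties above, $\frac{cost_{\textsc{MinVertexCover}}(G,\mathcal{B}'(G))}{Opt_{\textsc{MinVertexCover}}(G)}=\frac{cost_{n\textsc{U}}(\Gamma,Z)}{Opt_{n\textsc{U}}(\Gamma)}\le 2-\epsilon$, so $\mathcal{B}'$ is a polynomial-time $(2-\epsilon)$-approximation for \textsc{MinVertexCover}.

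Finally I would invoke the result of Khot and Regev~\cite{KhotRegev}: under the Unique Games Conjecture, \textsc{MinVertexCover} has no polynomial-time $(2-\epsilon)$-approximation for any $\epsilon>0$. The algorithm $\mathcal{B}'$ contradicts this, hence under UGC no such $\mathcal{B}$ exists, which is exactly the assertion of the theorem. There is no serious technical obstacle; the only points that need care are purely bookkeeping: checking that the $S$-reduction of Lemma~\ref{TwoSReductions} is oriented from \textsc{MinVertexCover} to $n$\textsc{U} (so that an approximation for $n$\textsc{U} transfers to \textsc{MinVertexCover}, and not conversely), and verifying that the quantity playing the role of ``input size'' in the Khot--Regev bound matches the number of vertices $n$ used to index $n$\textsc{U}, so that the threshold $2-\epsilon$ carries over without change.
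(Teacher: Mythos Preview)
Your proposal is correct and follows exactly the paper's own argument: use the $S$-reduction from \textsc{MinVertexCover} to $n$\textsc{U} given by Lemma~\ref{TwoSReductions}, transfer a hypothetical $(2-\epsilon)$-approximation for $n$\textsc{U} back to \textsc{MinVertexCover}, and contradict Khot--Regev under UGC. Your write-up is simply a more explicit unpacking of the one-line justification the paper gives just before the theorem.
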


Let $G=(V,E)$ be an undirected graph. The product of graph $G,$ denoted by $G^2=(V^2,E^2)$ is a graph such that $V^2=V\times V$ and $E^2=$ $=\{((u,u'),(v,v'))\;|\; \mbox{ either } u=v \; \wedge\; (u',v')\in E \mbox{ or } (u,v) \in E\}.$
For permutations in $S^k_n,$ in the proof of the next theorem, we develope approach analogous to application of the next lemma in nonapproximality of the \textsc{MaxClique} problem with an approximation factor better than $\sqrt{n}.$

\begin{lemma} \label{folk}
Graph $G$ has a clique of size $k$ if and only if graph $G^2$ has a clique of size $k^2$ \cite{1994papadimitriou}, chapter 13.
\end{lemma}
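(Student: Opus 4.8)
The plan is to prove Lemma~\ref{folk} by establishing both directions of the equivalence via an explicit correspondence between cliques in $G$ and cliques in $G^2$, exploiting the specific (asymmetric) definition of the graph product $E^2$ given just above the statement. Note that in this product, $(u,u')$ and $(v,v')$ are adjacent when either $u=v$ and $(u',v')\in E$, or $(u,v)\in E$ (with no constraint on the second coordinates in the latter case). So the second coordinate only matters when the first coordinates coincide.

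First I would prove the forward direction. Suppose $K=\{w_1,\ldots,w_k\}$ is a clique in $G$. I claim $K\times K=\{(w_i,w_j) : i,j\in[k]\}$ is a clique of size $k^2$ in $G^2$. Take two distinct vertices $(w_i,w_j)$ and $(w_p,w_q)$. If $w_i=w_p$ then $j\neq q$, and since $w_j,w_q\in K$ are distinct clique vertices, $(w_j,w_q)\in E$, so the first disjunct applies. If $w_i\neq w_p$, then again both lie in $K$, so $(w_i,w_p)\in E$, and the second disjunct applies. Hence every pair is adjacent and $|K\times K|=k^2$ since the $w_i$ are distinct.

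Next I would prove the reverse direction. Suppose $C$ is a clique of size $k^2$ in $G^2$. Let $\pi_1(C)=\{u : (u,u')\in C \text{ for some } u'\}$ be the projection onto the first coordinate. I claim $\pi_1(C)$ is a clique in $G$: for distinct $u,v\in\pi_1(C)$, pick witnesses $(u,u'),(v,v')\in C$; they are adjacent in $G^2$, and since $u\neq v$ the first disjunct fails, so $(u,v)\in E$. Thus $\pi_1(C)$ is a clique. Moreover, for each fixed $u\in\pi_1(C)$, the set $C_u=\{u' : (u,u')\in C\}$ must be a clique in $G$ as well: for distinct $u',v'\in C_u$, the vertices $(u,u'),(u,v')\in C$ are adjacent with equal first coordinate, so the first disjunct forces $(u',v')\in E$. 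Now the key counting step: $|C|=\sum_{u\in\pi_1(C)}|C_u|\le |\pi_1(C)|\cdot \max_u|C_u|$. Writing $\omega(G)$ for the clique number of $G$, both $|\pi_1(C)|\le\omega(G)$ and each $|C_u|\le\omega(G)$, so $k^2=|C|\le\omega(G)^2$, giving $\omega(G)\ge k$, i.e.\ $G$ has a clique of size $k$.

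The main obstacle — or rather the only subtlety — is the reverse direction: one must not assume $C$ has product structure, and the bound $|C|\le|\pi_1(C)|\cdot\max_u|C_u|$ together with the two separate observations (the projection is a clique, and each fiber is a clique) is what closes the argument. Everything else is a direct unwinding of the definition of $E^2$. Alternatively, one may simply cite \cite{1994papadimitriou}, chapter~13, as the statement already does, and present only the short forward construction for concreteness, since that is the direction actually used in the subsequent theorem about $S_n^k$.
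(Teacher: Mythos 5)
Your proof is correct and complete: the forward direction via $K\times K$ and the reverse direction via the projection/fiber decomposition with the counting bound $|C|\le|\pi_1(C)|\cdot\max_u|C_u|$ is exactly the standard argument for this folklore fact. The paper itself supplies no proof to compare against --- it simply cites \cite{1994papadimitriou}, chapter 13 --- so your write-up actually fills a gap rather than diverging from anything. The only point worth making explicit is that in the fiber argument you tacitly use that $G$ is simple (no self-loops), so that the second disjunct $(u,u)\in E$ cannot fire and the adjacency of $(u,u')$ and $(u,v')$ really does force $(u',v')\in E$; with that remark added, nothing is missing.
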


\begin{theorem}
	\label{sqrtLFS}
	For any constant $c\in \mathbb{N},$ it is \textsc{NP}-hard to approximate the $n^{{1}/{c}}$\textsc{LFS} problem within $n^{1-\epsilon}$ , for any $\epsilon > 0.$   
\end{theorem}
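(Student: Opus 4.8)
The plan is to reduce \textsc{MaxClique} on an $m$-vertex graph $G$ to $n^{1/c}$\textsc{LFS} on instances with $n=m^{c}$, so that non-approximability of \textsc{MaxClique} within $m^{1-\epsilon}$ (\cite{Hastad96,Zuckerman07}, as in Theorem~\ref{rGTheorem}) transfers --- after raising clique numbers to the $c$-th power --- to non-approximability of $n^{1/c}$\textsc{LFS} within $n^{1-\epsilon}$. Two ingredients are needed. First, let $G^{(1)}=G$ and $G^{(c)}=G[G^{(c-1)}]$, where $G[H]$ is the graph on $V(G)\times V(H)$ in which $(u,u')$ is adjacent to $(v,v')$ iff $(u,v)\in E(G)$, or else $u=v$ and $(u',v')\in E(H)$; this is the product of Lemma~\ref{folk} with possibly different factors, so $G^{(2)}$ is the $G^{2}$ of that lemma, $|V(G^{(c)})|=m^{c}$, and by induction (equivalently, $\omega(G[H])=\omega(G)\,\omega(H)$) we get $\omega(G^{(c)})=\omega(G)^{c}$. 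Second, and this is the crux, I need to encode $G^{(c)}$, a graph on $m^{c}$ vertices, as an instance of $n^{1/c}$\textsc{LFS}, that is, using only $m=n^{1/c}$ permutations of $[m^{c}]$, rather than the $m^{c}$ permutations that applying Lemma~\ref{TwoSReductions} directly to $G^{(c)}$ would yield.

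I would construct that encoding by recursion on $c$. Identify $[m^{c}]$ with $[m]\times[m^{c-1}]$ and read a vertex of $G^{(c)}=G[G^{(c-1)}]$ as $(u_{1},\bar w)$ with $u_{1}\in[m]$ a vertex of $G$ and $\bar w$ a vertex of $G^{(c-1)}$; then $(u_{1},\bar w)$ and $(v_{1},\bar z)$ are adjacent iff $(u_{1},v_{1})\in E(G)$ when $u_{1}\neq v_{1}$, and $\bar w$ is adjacent to $\bar z$ in $G^{(c-1)}$ when $u_{1}=v_{1}$. Consequently $K\subseteq[m]\times[m^{c-1}]$ is a clique in $G^{(c)}$ iff (a) its first-coordinate projection $\pi_{1}(K)$ is a clique in $G$ and (b) every fibre $K_{u_{1}}=\{\bar w:(u_{1},\bar w)\in K\}$ is a clique in $G^{(c-1)}$. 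Assume inductively that $G^{(c-1)}$ is encoded by pairs $(\rho^{i}_{s},\rho^{i}_{t})$, $i\in[m]$, of permutations of $[m^{c-1}]$, the base case $c=1$ being the pairs $\bigl((i)\cdot a_{i}\cdot b_{i},\ a_{i}\cdot(i)\cdot b_{i}\bigr)$ of Lemma~\ref{TwoSReductions} with $a_{i}$, $b_{i}$ the increasing lists of non-neighbours and of neighbours of $i$ in $G$. For each $i\in[m]$ I define $(\tau^{i}_{s},\tau^{i}_{t})\in S_{m^{c}}^{2}$ by laying out the blocks $\{j\}\times[m^{c-1}]$, $j\in[m]$, in block-order $\bigl(i;\ \text{non-neighbours of }i;\ \text{neighbours of }i\bigr)$ in $\tau^{i}_{s}$ and in block-order $\bigl(\text{non-neighbours of }i;\ i;\ \text{neighbours of }i\bigr)$ in $\tau^{i}_{t}$, and inside each block $j$ arranging its $m^{c-1}$ elements according to $\rho^{i}_{s}$ in $\tau^{i}_{s}$ and according to $\rho^{i}_{t}$ in $\tau^{i}_{t}$.

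The point is that common subsequences of the single pair $(\tau^{i}_{s},\tau^{i}_{t})$ factor exactly along (a) and (b). Across blocks, the only block that changes position from $\tau^{i}_{s}$ to $\tau^{i}_{t}$ is block $i$, which crosses the non-neighbour blocks; hence the nonempty blocks can be ordered consistently in both permutations iff whenever block $i$ is nonempty no non-neighbour block is, i.e.\ iff $i\notin\pi_{1}(K)$ or $\pi_{1}(K)\subseteq\{i\}\cup\{\text{neighbours of }i\}$ --- exactly the $i$-th clause of ``$\pi_{1}(K)$ is a clique of $G$'' from Lemma~\ref{TwoSReductions}. Inside block $j$ the two internal orders are $\rho^{i}_{s}$ and $\rho^{i}_{t}$, so restricting a common subsequence to block $j$ gives a common subsequence of $(\rho^{i}_{s},\rho^{i}_{t})$ on $K_{u_{1}}$, and conversely concatenating block-consistent choices of such subsequences in the common block-order produces a common subsequence of $(\tau^{i}_{s},\tau^{i}_{t})$. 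Thus $K$ induces a common subsequence of $(\tau^{i}_{s},\tau^{i}_{t})$ iff the $i$-th $G$-clause holds for $\pi_{1}(K)$ and $K_{u_{1}}$ induces a common subsequence of $(\rho^{i}_{s},\rho^{i}_{t})$ for every $u_{1}$; conjoining over $i\in[m]$ and using the induction hypothesis for (b), $K$ is a feasible solution of $f(G):=\langle(\tau^{1}_{s},\dots,\tau^{m}_{s}),(\tau^{1}_{t},\dots,\tau^{m}_{t})\rangle$ iff $K$ is a clique in $G^{(c)}$.

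It remains to assemble the reduction. The instance $f(G)$ has $m=n^{1/c}$ permutations over $[m^{c}]=[n]$, is computable in time polynomial in $m$, and by the equivalence above together with $\omega(G^{(c)})=\omega(G)^{c}$ satisfies $Opt_{n^{1/c}\textsc{LFS}}(f(G))=\omega(G)^{c}$; moreover from any feasible $K$ one gets, in polynomial time, a clique of $G$ of size at least $|K|^{1/c}$, by the usual recursion --- either $\pi_{1}(K)$ already has that size, or a largest fibre is a $G^{(c-1)}$-clique of size exceeding $|K|^{(c-1)/c}$ into which one recurses. Hence a polynomial-time $n^{1-\epsilon}$-approximation for $n^{1/c}$\textsc{LFS} would, on $f(G)$, return a $G^{(c)}$-clique of size at least $\omega(G)^{c}/m^{c(1-\epsilon)}$, and therefore a $G$-clique of size at least $\omega(G)/m^{1-\epsilon}$, which is an $m^{1-\epsilon}$-approximation algorithm for \textsc{MaxClique} and hence impossible unless $P=NP$ \cite{Hastad96,Zuckerman07}. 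Instances of $n^{1/c}$\textsc{LFS} whose $n$ is not a perfect $c$-th power are irrelevant to the statement (or may be obtained by padding the ground set with pairwise incompatible dummy elements, which moves the optimum by at most one). The step I expect to be the real work is this encoding: verifying that the one combined pair $(\tau^{i}_{s},\tau^{i}_{t})$ realizes precisely the conjunction of the top-level $G$-constraint on $\pi_{1}(K)$ with the recursive $G^{(c-1)}$-constraints on the fibres, with nothing spuriously gained or lost when the block-mechanism of Lemma~\ref{TwoSReductions} is nested within itself $c$ times.
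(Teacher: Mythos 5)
Your proposal is correct, and its combinatorial heart coincides with the paper's: the nested block substitution (outer order given by the level-$1$ permutations, inner order by the level-$(c-1)$ permutations, repeated $c$ times) is exactly the paper's recursive definition $\lambda^{i}_{s}=\bigodot_{k\rightarrow\sigma^{i}_{s}}f_{k}(\tau^{i}_{s})$, instantiated with the Lemma~\ref{TwoSReductions} encoding as the seed, and your projection-versus-largest-fibre dichotomy for extracting a size-$|K|^{1/c}$ witness is the paper's \textsc{ExtractLFS} with its $\alpha^{c}(J)<|J|^{1/c}$ test. The packaging, however, is genuinely different. The paper performs a gap-preserving self-reduction from $\nu$\textsc{LFS} to $n^{1/c}$\textsc{LFS} on \emph{arbitrary} instances $\Gamma$, carries correctness through Remarks~\ref{zwiekszprim} and~\ref{zmniejszprim} (which only relate solution sizes of $T^{c}$, $T^{c-1}$ and $\Gamma$), and then invokes Theorem~\ref{rGTheorem} as a black box. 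You instead reduce directly from \textsc{MaxClique}: you form the iterated lexicographic power $G^{(c)}$, encode it with only $n^{1/c}$ permutation pairs, and prove the sharper invariant that the feasible solutions of $f(G)$ are \emph{exactly} the cliques of $G^{(c)}$, from which $Opt_{n^{1/c}\textsc{LFS}}(f(G))=\omega(G)^{c}$ follows via $\omega(G[H])=\omega(G)\,\omega(H)$ (the iterated form of Lemma~\ref{folk}). What your route buys is a logically tighter argument -- it bypasses the intermediate hardness of $\nu$\textsc{LFS} and rests only on the H{\aa}stad--Zuckerman bound plus a classical graph-product identity, and the ``feasible $=$ clique'' equivalence makes the correctness of the amplification transparent; what the paper's route buys is generality, since its construction amplifies gaps for arbitrary instances of the hierarchy rather than only for the clique-derived ones. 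Your closing remarks on non-perfect-power $n$ and on the polynomial running time are side issues the paper also glosses over, and do not affect correctness.
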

\begin{proof}
	Assume that $n=\nu^{c}.$  Let $\Gamma =\langle (\sigma^{1}_{s},\ldots,\sigma^{\nu}_{s}),$ $(\sigma^{1}_{t},\ldots,\sigma^{\nu}_{t}) \rangle,$ where  
	$\sigma^{i}_{s},\sigma^{i}_{t} \in S_\nu.$  
	%Let us define auxiliary function $f_i.$ 
	For $c > 1$ and $i\in [\nu],$ let $f_i:[\nu^{c-1}] \mapsto [i\nu^{c-1}]\setminus [(i-1)\nu^{c-1}]$ satisfy equation 
	$f_i(k)=(i-1)\nu^{c-1}+k.$ Let $\lambda=(a_1,\ldots,a_\nu) \in S_{\nu^{c-1}}.$ 
	If $[\nu^{c-1}]$ is the domain of $f$ and the function is total over its domain, we write $f(\lambda)$ to denote a tuple 
	$(f(a_1),f(a_2),\ldots,f(a_\nu)).$

	We will denote by $\bigodot$ the generalized concatenation of sequences. If $a=(i_1,i_2,\ldots,i_\alpha) \in S_{\alpha}$ is a
	sequence of indexes, $\lambda_1,\lambda_2, \ldots, \lambda_\alpha$ are finite sequences, then $\bigodot\limits_{k\rightarrow a} \lambda_k$ denotes a concatenation $\lambda_{i_1}\cdot\lambda_{i_2} \cdots \lambda_{i_\alpha}.$  

	We define $\Lambda=T^c=\langle (\lambda^{1}_{s}, \ldots, \lambda^{\nu}_{s} ),$ $(\lambda^{1}_{t}, \ldots, \lambda^{\nu}_{t}) \rangle$
	recursively using $\Gamma$ and $T^{c-1} =\langle (\tau^{1}_{s},\ldots,\tau^{\nu}_{s}), (\tau^{1}_{t},\ldots,\tau^{\nu}_{t}) \rangle,$
	where $\tau^{i}_{s},\tau^{i}_{t} \in S_{\nu^{c-1}}.$ For $c=1,$ $T^c=\Gamma.$
	The $\lambda^{i}_{s},\lambda^{i}_{t} \in S_{\nu^c}$ are defined by the following equations

	\[\lambda^{i}_{s}= \bigodot\limits_{k\rightarrow \sigma^{i}_{s}}f_k(\tau^{i}_{s})\;\;\;\;\;\;\;\;\;\;
	\lambda^{i}_{t}= \bigodot\limits_{k\rightarrow \sigma^{i}_{t}}f_k(\tau^{i}_{t}).\]

	\begin{remark} \label{zwiekszprim}
		Assume that  $\Gamma$  has a feasible solution of size $m$ and $T^{c-1}$ has 
		a solution of size $m^{c-1}.$ Then $\Lambda$ has a solution of size $m^{c}.$
	\end{remark}
	Define auxiliary functions $block^c$ and $inblock^c$ both of type $[n=\nu^c] \mapsto [\nu]:$
	\[block^c(s)=((s-1) \; \div \; \nu^{c})+1, \;\;\;\;\;\;\; inblock^c(s)=((s-1)\mod \nu^{c})+1 .\]
	For a given $J \subseteq [\nu^c]$ let  $W^c(J):=\{l\;|\; \exists r\in J \; l= block^{c-1}(r)\},$ 
	$H^c(l,J):=\{ inblock^{c-1}(r)\;|\; r \in J \mbox{ and } l=block^{c-1}(r) \},$
	$\alpha^c(J)=\max\{|H^c(l,J)| \;| \; l\in [\nu]\}, \beta^c(J):=\max\{l\;|\;|H^c(l,J)|=\alpha^c(J)\}.$

	\begin{remark} \label{zmniejszprim}
		Assume that $c > 1$ and $\Lambda$ has a solution  $J$ of size $m^{c}.$
		Then  $T^{c-1}$ has a solution of the size $m^{c-1}$ and of the form $H^c(\beta^c(J),J),$ or $\Gamma$ has solution of size $m$ and of the form $W^c(J).$
	\end{remark}
	Now we will show, that if there exists a polynomial algorithm solving the $n^{1/c}$\textsc{LFS} problem, with an approximation 
	factor within $O(n^{1-\epsilon})$ for some $0< \epsilon < 1,$ then there exists a polynomial time algorithm that approximates 
	$\nu$\textsc{LFS} with factor $\nu^{1-\gamma}$ for some $0< \gamma < 1.$ 
	Instances of the $\nu$\textsc{LFS} problem consist of permutations from $S_\nu.$ 
	Asssume the existence of a polynomial time algorithm $\cal C$ that approximates $n^{1/c}$\textsc{LFS} with a factor $n^{1-\epsilon}$ 
	and let the polynomial be denoted by $p.$ Approximation algorithm $\mathcal{B}$ for $\nu\textsc{LFS}$ works as follows:

	\begin{enumerate}
		\item Create $\Lambda =\langle (\lambda^{1}_{s},\ldots,\lambda^{\nu}_{s}), (\lambda^{1}_{t},\ldots,\lambda^{\nu}_{t}) \rangle$ as it was defined previously.
		\item Run algorithm $\cal C.$ Assume that it has returned the set $J\subseteq [n].$
		\item Execute recursive procedure $\textsc{ExtractLFS}(\Lambda,c,J).$
		
		\begin{algorithm}[H]
			  \SetKwFunction{FSum}{\textsc{ExtractLFS}}
			% Write Function with word ``Function''
			  \SetKwProg{Fn}{Procedure}{:}{}
			  \Fn{\FSum{$\Lambda$, $c$, $J$}}{
					\uIf{$\alpha^c(J) < |J|^{1/c} $}{
						\KwRet $W^c(J)$\;
					}
					\Else{
						\uIf{$c=2$}{
							\KwRet $H^c(\beta^c(J),J)$\;
						}
						\uIf{$c>2$}{
							$J=H^c(\beta^c(J),J)$\;
							$\Lambda=T^{c-1}$\;
							\KwRet \FSum{$\Lambda$, $c-1$, $J$}\;
						}
					}
			  }
			\end{algorithm}
	\end{enumerate}

	One can prove, by induction on $c,$ that for input $\Lambda=T^c$ and its solution $J$ the procedure 
	$\textsc{ExtractLFS}(\Lambda,c,J)$ returns a solution of $\Gamma$ of size not less than $|J|^{1/c}.$
	Induction step can be done with the following assumtion.
	
	\begin{assumption} 
	Let $J$ be a solution of $\nu$\textsc{LFS} for instance $T^{c-1}.$ Then \linebreak $\textsc{ExtractLFS}(T^{c-1},c-1,J)$ returns a solution of $\Gamma$ 
	of size not less than $|J|^{1/(c-1)}.$
	\end{assumption}
	
	In the inductive proof we use Remark $\ref{zmniejszprim}$ and the fact that $T^1=\Gamma$ (for the first step of induction). 

	\begin{corollary}  \label{corr}
		If $\Lambda$ has a solution of size $m^c,$ then $\Gamma$ has a solution of size $m.$
	\end{corollary}

	We conclude that, only one among the inequalities, either $|W^c(J)|< |J|^{1/c}$ or $|H^c(\beta^c(J),$ $J)| < |J|^{(c-1)/c}$ can be true. 
	Indeed, if both  inequalities hold there are fewer than $|J|^{1/c}$ intervals of the form $[i\nu^{c-1} + 1, (i+1)\nu^{c-1}]$ that
	intersect with $J$ and each interval which contains some element of $J$ also contains less than $|J|^{(c-1)/c}.$ This means
	that the interval $[1,\nu^c]$ contains fewer than $|J|$ elements. This is contradiction. Thus, by Remark \ref{zmniejszprim} the solution 
	returned by $\textsc{ExtractLFS}(\Lambda,c,J).$ is not smaller than $|J|^{1/c}.$  

	The time complexity of the presented reduction depends on the time complexity of the $T^c$ construction, the complexity of
    $\textsc{ExtractLFS}(T^c,c,J)$ and $\mathcal{C}.$ 
	The complexity of the transformation  $T^{c-1}$ into $T^c$ can be estmiated by 
	$O(\nu^2\cdot \nu^{c-1}).$ The cost of building $f_k(\tau)$ costs $O(\nu^{c-1})$ because it depends on the length
	of $\tau$ element. This cost is multiplied by $O(\nu)$ in the process of the concatenation of $f_k(\tau)$ components 
	and the whole outcome is created $\nu$ times during $\lambda$ components construction. Since the whole construction
	starts from $\Gamma=T^1$ the overall cost is $\Sigma_{i=2}^c O(n^{i+1}).$ This is $O(n^{c+1})$ assuming that $c$ is a
	constant. The complexity of $\textsc{ExtractLFS}(\Lambda,c,J)$ depends on how $\alpha^c(J),$ $W^c(J),$ and $H^c(\beta^c(J),J)$ are determined.
	The rough estimation of the upper bound for those costs is $O(\nu^{2c}).$ 
	The complexity of $\mathcal{C}$ is $O(p(\nu^{c+1})).$ The complexity of all three steps of the reduction 
	is $O(p(\nu^{2c}))=O(p^2(n))$ -- we assume that degree of $p$ is at least $1.$

	Now let estimate the approximation factor. By the assumption about the approximation factor of $\cal C,$ we have

	\[\frac{|Opt_{n^{1/c}\textsc{LFS}}(\Lambda)|}{|\mathcal{C}(\Lambda)|} \leq n^{1-\epsilon}=\nu^{c(1-\epsilon)}\] for any instance $\Lambda.$ The outcome of algorithm $\mathcal{B}$ satisfies 
	\[|\mathcal{B}(\Gamma)| \geq |\mathcal{C}(\Lambda)|^{1/c} = |J|^{1/c}\]
	But also by Remarks \ref{zwiekszprim} and Corollary \ref{corr} the following equation holds 
	\[|Opt_{\nu\textsc{LFS}}(\Gamma)| = |Opt_{n^{1/c}\textsc{LFS}}(\Lambda)|^{1/c}.\] 
	Therefore
	\[\frac{|Opt_{\nu\textsc{LFS}}(\Gamma)|}{|\mathcal{B}(\Gamma)|} \leq  \left( \frac{|Opt_{n^{1/c}\textsc{LFS}}(\Lambda)|}{|\mathcal{C}(\Lambda)|} \right)^{1/c} \leq \left( \nu^{c(1-\epsilon)} \right)^{1/c} = \nu^{1-\epsilon}.\]
	It means that the $\nu\textsc{LFS}$ problem has a polynomial time approximation algorithm with factor $\nu^{1-\epsilon}.$
	This contradicts Theorem \ref{rGTheorem}.
\end{proof}

\begin{theorem}
	For any constant $c<n,$ there exists no polynomial time $(2-\epsilon)$-approximation algorithm for the $n^{1/c}$\textsc{U} problem unless
	The Unique Game Conjecture is not true.
	\end{theorem}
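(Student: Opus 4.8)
The plan is to sidestep the recursive amplification used in the proof of Theorem~\ref{sqrtLFS}: there we needed a factor of $n^{1-\epsilon}$ to survive a $c$-th root, which forced the heavy construction $T^c$, whereas here we only need a factor of $2-\epsilon$ to be preserved essentially unchanged. So it suffices to give a plain, optimum‑preserving reduction from the $\nu$\textsc{U} problem --- which by Theorem~\ref{UGC2epsilon} has no polynomial‑time $(2-\epsilon)$-approximation unless the Unique Game Conjecture fails --- to the $n^{1/c}$\textsc{U} problem, obtained by simply padding the universe while keeping the number of coordinates fixed.

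First I would fix the constant $c$, treat $\nu$ as a free parameter, and set $n=\nu^{c}$, so that $n^{1/c}=\nu$ and the hypothesis $c<n$ holds whenever $\nu\ge 2$. Given an instance $\Gamma=\langle(\sigma^{1}_{s},\ldots,\sigma^{\nu}_{s}),(\sigma^{1}_{t},\ldots,\sigma^{\nu}_{t})\rangle$ of $\nu$\textsc{U} over $[\nu]$, I introduce fresh elements $\nu+1,\ldots,\nu^{c}$, let $P$ be the increasing sequence listing all of them, and put $\lambda^{i}_{s}=P\cdot\sigma^{i}_{s}$ and $\lambda^{i}_{t}=P\cdot\sigma^{i}_{t}$ for $i\in[\nu]$. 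Then $\Lambda=\langle(\lambda^{1}_{s},\ldots,\lambda^{\nu}_{s}),(\lambda^{1}_{t},\ldots,\lambda^{\nu}_{t})\rangle$ is an instance of $n^{1/c}$\textsc{U} over $[n]$ with exactly $n^{1/c}=\nu$ coordinates, computable in time polynomial in $\nu$ because $c$ is constant.

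The two facts I would then verify are structural. (i) In every $\lambda^{i}_{s}$ and $\lambda^{i}_{t}$ the block $P$ is a prefix and the remainder is the original permutation, so every common subsequence of $\lambda^{i}_{s}$ and $\lambda^{i}_{t}$ is a subsequence of $P$ followed by a common subsequence of $\sigma^{i}_{s}$ and $\sigma^{i}_{t}$; hence a set $A\subseteq[n]$ induces a common subsequence for all coordinates of $\Lambda$ iff $A=F\cup A'$ with $F\subseteq\{\nu+1,\ldots,\nu^{c}\}$ arbitrary and $A'\subseteq[\nu]$ inducing a common subsequence for all coordinates of $\Gamma$. In particular a maximal feasible \textsc{LFS}-solution of $\Lambda$ contains the whole fresh block, so $Opt_{n^{1/c}\textsc{LFS}}(\Lambda)=(\nu^{c}-\nu)+Opt_{\nu\textsc{LFS}}(\Gamma)$, whence $Opt_{n^{1/c}\textsc{U}}(\Lambda)=n-Opt_{n^{1/c}\textsc{LFS}}(\Lambda)=\nu-Opt_{\nu\textsc{LFS}}(\Gamma)=Opt_{\nu\textsc{U}}(\Gamma)$. (ii) If $Z\subseteq[n]$ is a feasible \textsc{U}-solution of $\Lambda$, then since $[n]\setminus(Z\cap[\nu])=([n]\setminus Z)\cup\{\nu+1,\ldots,\nu^{c}\}$ is again of the form $F\cup A'$ permitted by (i), the set $Z\cap[\nu]$ is also a feasible \textsc{U}-solution of $\Lambda$; being contained in $[\nu]$ it restricts to a feasible \textsc{U}-solution of $\Gamma$, and $|Z\cap[\nu]|\le|Z|$.

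Putting these together: if $\mathcal{A}$ were a polynomial‑time $(2-\epsilon)$-approximation for $n^{1/c}$\textsc{U}, then on input $\Gamma$ I would build $\Lambda$, run $\mathcal{A}$ on it to obtain $Z$ with $|Z|\le(2-\epsilon)\,Opt_{n^{1/c}\textsc{U}}(\Lambda)$, and output $Z\cap[\nu]$; by (i) and (ii) this is a feasible \textsc{U}-solution of $\Gamma$ of size at most $(2-\epsilon)\,Opt_{\nu\textsc{U}}(\Gamma)$, i.e., a polynomial‑time $(2-\epsilon)$-approximation for $\nu$\textsc{U} valid for every $\nu$, contradicting Theorem~\ref{UGC2epsilon} unless the Unique Game Conjecture is false. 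The only point requiring care --- and the main, if mild, obstacle --- is claim (i): one must argue that a common subsequence in some coordinate cannot interleave a fresh element with an original one, which holds precisely because $P$ occupies a prefix in all $2\nu$ permutations simultaneously; everything after that is routine bookkeeping.
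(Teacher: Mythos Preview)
Your argument is correct, but it is not the route the paper takes. The paper reduces $\nu$\textsc{U} to $n^{1/c}$\textsc{U} by \emph{replicating} the instance $\nu^{c-1}$ times on disjoint shifted copies of $[\nu]$: each $\lambda^{i}_{s}$ (resp.\ $\lambda^{i}_{t}$) is the concatenation $g_{1}(\sigma^{i}_{s})\cdots g_{\nu^{c-1}}(\sigma^{i}_{s})$ with $g_{k}(p)=(k-1)\nu+p$, so the optimum scales by $\nu^{c-1}$, and from an approximate solution $J$ one extracts the block $l$ with smallest intersection $H^{2}(l,J)$ via an averaging argument. Your reduction instead \emph{pads} the universe with $\nu^{c}-\nu$ fresh elements placed as a common prefix $P$ in all permutations, so the optimum is preserved exactly and the extraction step is just intersecting the returned solution with $[\nu]$. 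Both achieve the same contrapositive to Theorem~\ref{UGC2epsilon}; your construction is more elementary and requires no averaging, while the paper's reuses the block machinery ($H^{c}$, $block^{c}$, $inblock^{c}$) already developed for Theorem~\ref{sqrtLFS}, which explains its choice. Your verification of (i) is the only place that needs an argument, and you identify it correctly: because $P$ is a prefix in every one of the $2\nu$ permutations, no common subsequence can interleave fresh and original elements, so feasibility factors as you claim.
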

	\begin{proof} (Sketch). Let $n=\nu^c.$ Instances of the $\nu$\textsc{U} problem are of the form $\Gamma = \langle \Gamma_1, \Gamma_2 \rangle,$ where 
		$\Gamma_1,\Gamma_2 \in S^\nu_\nu.$ Consider the following reduction from $\nu$\textsc{U} to $n^{1/c}$\textsc{U}. 
		For a given instance $\Gamma = \langle (\sigma^{1}_{s},\ldots,\sigma^{\nu}_{s}), (\sigma^{1}_{t},\ldots,\sigma^{\nu}_{t}) \rangle$ of $\nu$\textsc{U}, let 
		\[\lambda^{i}_{s} = \bigodot\limits_{k\rightarrow (1,\ldots, \nu^{c-1})}g_k(\sigma^{i}_{s})\;\;\;\;\;\;\;\;\;\;
		\lambda^{i}_{t}= \bigodot\limits_{k\rightarrow (1,\ldots, \nu^{c-1})}g_k(\sigma^{i}_{t}).\]
		In this proof, $g_k(p)=(k-1)\nu + p.$ The $\Lambda=\langle (\lambda^{1}_{s},\ldots,\lambda^{\nu}_{s}), (\lambda^{1}_{t},\ldots,\lambda^{\nu}_{t})\rangle \in S^{\nu}_n$ is an instance of $n^{1/c}$\textsc{U}.
		If there exists algorithm $C,$ that returns a feasible solution $J$ which is smaller than $(2-\epsilon)|Opt_{n^{1/c}\textsc{U}}(\Lambda)|,$ then 
		for $M=\operatorname*{argmin}_{k \in J} |H^2(k,J)|$ and $l\in M,$ $H^2(l,J)$ is a feasible solution of $\Gamma$ and $|H^2(l,J)|\leq (2-\epsilon)|Opt_{\nu\textsc{U}}(\Gamma)|.$ 
	\end{proof}
	
\subsection{On parametrized complexity}

An $S$-reduction $(f,g)$ preserves the costs values between reduced instance $x$ of an $A$ problem and outcome instances $x'$ of a $B$ problem. 
In the case of a minimization problem, the decision version of the problem $B$ is formulated as
the question if $cost_B(x',y') \leq l$ for a given $l \in [n]$, on instance $x'$ of $B,$ $y'$ which is a feasible solution of instance $x'.$
Analogously the decision version of the problem $A$ is the question if $cost_A(x,g(x,y')) \leq l,$ for instance $x$ of $A.$
Assuming that $l$ is parameter, by the equality of costs, the $f$ function is also the reduction $R.$ It is easy to note that 
$x\in A$ iff $x'\in B.$ The $R$ is an fpt-algorithm because the $f$ is polynomially computable. The function $h$ can be assumed to be identity and 
$\kappa_A(x)=cost_A(x,g(x,y')) ,\kappa_B(x)=cost_B(f(x),y'),$ where $y'$ is feasible solution of the problem $B$ for an instance $f(x)$.
The above shows that if $(f,g)$ is $S$-reductions, then $f$ is an fpt-reduction. Similar argument can be applied for maximization
problems.

Therefore, $k\textsc{LFSD} \leq_{FPT} \textsc{MaxClique}$ and $k\textsc{UD} \leq_{FPT} \textsc{VertexCover}.$ Since 
\textsc{MaxClique} is in $W[1],$  \textsc{VertexCover} is in $FPT$ and both parametrized classes $W[1]$ and $FPT$  are closed under fpt-reductions \cite{FlumGrohe06} we obtain:

\begin{theorem}
\label{thirdT}
The k\textsc{LFSD} problem is in $W[1]$ class for any $k\in [n]$.  The k\textsc{UD} problem belongs to the $FPT$ class for any $k\in [n]$.
\end{theorem}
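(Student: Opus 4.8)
The plan is to reassemble pieces already in hand rather than to build anything new. By Lemma~\ref{firstL} we have the cost-preserving $S$-reductions $k\textsc{LFS}\leq_S\textsc{MaxClique}$ and $k\textsc{U}\leq_S\textsc{MinVertexCover}$, with instance map $f$ computable in time polynomial in $n$ in both cases (note $k\le n$, so the $O(kn\log(\cdot))$-style construction of the associated graph is polynomial in the input size uniformly in $k$). As spelled out in the paragraph preceding the statement, parametrising the decision versions by the target solution size $l$ turns such an $f$ into an fpt-reduction: it is polynomial-time, hence an fpt-algorithm; it sends yes-instances to yes-instances and back because $Opt_A(x)=Opt_B(f(x))$; and the parameter is preserved exactly, so one may take $h=\mathrm{id}$. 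The only point deserving a line of justification is that ``$\exists$ feasible solution of size exactly $l$'' coincides with the one-sided bound used by the graph problems: feasible sets of $k\textsc{LFS}$ and cliques are downward closed (a subset of a set inducing a common subsequence still induces one; a subset of a clique is a clique), and feasible sets of $k\textsc{U}$ and vertex covers are upward closed, so up to the trivial range restriction $l\in\{0,\dots,n\}$ the decision formulations match those of \textsc{Clique} and \textsc{VertexCover}.

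Consequently $k\textsc{LFSD}\leq_{FPT}\textsc{Clique}$ and $k\textsc{UD}\leq_{FPT}\textsc{VertexCover}$. I would then invoke the two classical facts: parametrised \textsc{Clique} lies in $W[1]$ and parametrised \textsc{VertexCover} lies in $FPT$ (e.g.\ via Buss kernelisation or a bounded search tree), together with closure of $W[1]$ and of $FPT$ under fpt-reductions~\cite{FlumGrohe06}. Pulling $k\textsc{LFSD}$ back along its fpt-reduction into $W[1]$, and $k\textsc{UD}$ into $FPT$, yields the claim for every $k\in[n]$.

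I do not expect a genuine obstacle: the entire content is the ``$S$-reduction $\Rightarrow$ fpt-reduction'' observation, and the one thing to be careful about is the bookkeeping between the optimisation problems and their decision versions --- in particular picking the right parameter (the size bound $l$, not $n$ and not $k$) and checking monotonicity of the feasible-solution families so that ``size exactly $l$'' can be replaced by the one-sided bound the graph problems use. A self-contained alternative would bypass \textsc{Clique}/\textsc{VertexCover} and give direct algorithms --- a bounded-search-tree/kernel for $k\textsc{UD}$, and a $\binom{n}{l}$-certificate argument placing $k\textsc{LFSD}$ in $W[1]$ through the standard weighted-circuit characterisation --- but routing through Lemma~\ref{firstL} is shorter and reuses work already done.
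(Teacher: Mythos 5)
Your proposal is correct and follows essentially the same route as the paper: it observes that the $S$-reductions of Lemma~\ref{firstL}, parametrised by the solution-size bound $l$, are fpt-reductions (with $h=\mathrm{id}$), and then invokes \textsc{Clique}\,$\in W[1]$, \textsc{VertexCover}\,$\in FPT$, and closure of both classes under fpt-reductions. The extra monotonicity check you include (downward/upward closure of the feasible families, reconciling ``size exactly $l$'' with the one-sided bounds) is a sensible piece of bookkeeping the paper glosses over, but it does not change the argument.
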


\begin{theorem}
For any constant $c,$ the $n^{{1}/{c}}$\textsc{LFSD} problem is $W[1]$-hard.
\end{theorem}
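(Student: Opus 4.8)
The plan is to give an fpt-reduction from \textsc{Clique} (parameterized by the clique size), which is $W[1]$-complete, to $n^{1/c}$\textsc{LFSD} with the solution size $l$ as parameter. The case $c=1$ is already contained in Lemma~\ref{TwoSReductions}: the $S$-reduction from \textsc{MaxClique} to $n$\textsc{LFS} is, as shown before Theorem~\ref{thirdT}, an fpt-reduction from \textsc{Clique} to $n$\textsc{LFSD}. So I would concentrate on $c\ge 2$ and build the reduction in two layers: the graph encoding of Lemma~\ref{TwoSReductions}, followed by a padding step that inflates the ground set while keeping the number of permutations unchanged.

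First I would take a \textsc{Clique} instance $(G,k)$ with $G=(V,E)$, $|V|=m$ (we may assume $m\ge 2$, the other case being trivial), and apply the construction of Lemma~\ref{TwoSReductions} to obtain $\Gamma=\langle\Gamma_s,\Gamma_t\rangle$ with $\Gamma_s,\Gamma_t\in S_m^m$, i.e.\ a pair of $m$-tuples of permutations over $[m]$, such that by Remark~\ref{RFour} the cliques of $G$ are exactly the feasible solutions of $\Gamma$, and by Remark~\ref{RFive} the maximum feasible solution of $\Gamma$ has the size of the maximum clique of $G$. Then I would pad the ground set from $[m]$ to $[m^c]$ by adjoining $m^c-m$ fresh dummy symbols: in each $\sigma^i_s$ I prepend the dummies listed in increasing order and in each $\sigma^i_t$ I prepend the dummies in decreasing order, obtaining $\lambda^i_s,\lambda^i_t\in S_{m^c}$ and the instance $\Lambda=\langle(\lambda^1_s,\dots,\lambda^m_s),(\lambda^1_t,\dots,\lambda^m_t)\rangle$. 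Since $\Lambda$ has $m=(m^c)^{1/c}$ permutations in each tuple over a ground set of size $n:=m^c$, it is a legitimate instance of $n^{1/c}$\textsc{LFS}, and it is computable in time polynomial in $m$ for constant $c$.

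The step I would spend the most care on is verifying that the maximum feasible solution of $\Lambda$ has size exactly one more than that of $\Gamma$. Every dummy precedes every original element in both $\lambda^i_s$ and $\lambda^i_t$, and the original elements keep their $\Gamma$-order; hence a set with no dummy is feasible for $\Lambda$ iff it is feasible for $\Gamma$, and appending one dummy to any feasible solution of $\Gamma$ preserves feasibility. Conversely, any two dummies occur in opposite relative order in $\lambda^i_s$ versus $\lambda^i_t$, so no feasible set of $\Lambda$ contains two dummies; deleting its at most one dummy yields a feasible set of $\Gamma$. Combining this with the facts that every subset of a feasible solution is feasible and every subset of a clique is a clique, I get: $(G,k)\in\textsc{Clique}$ iff $G$ has a clique of size $k$ iff the maximum feasible solution of $\Lambda$ has size at least $k+1$ iff $\Lambda$ has a feasible solution of size exactly $k+1$. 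I would therefore output $\langle\Lambda,\,l\rangle$ with $l=k+1$ as the $n^{1/c}$\textsc{LFSD} instance; with $h(k)=k+1$ all three conditions of an fpt-reduction hold, and since $W[1]$ is closed under fpt-reductions, $n^{1/c}$\textsc{LFSD} is $W[1]$-hard.

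The main obstacle is the bookkeeping around the dummies: ensuring they cannot inflate a solution beyond a function of the parameter (forced by the reversed orderings, so at most one dummy is ever usable) and that they do not obstruct the clique encoding (forced by placing all dummies before all original symbols in both permutations of each pair, so that the restriction to $[m]$ reproduces $\Gamma$). Everything else — polynomiality of the padding, matching the dimension count $m$ to $(m^c)^{1/c}$, and the parameter bound $l=k+1$ — is routine.
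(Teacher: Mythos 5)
Your reduction is correct, but it takes a genuinely different route from the paper for the crucial second stage. The paper also starts from the $W[1]$-hardness of $n$\textsc{LFSD} obtained via Lemma~\ref{TwoSReductions}, but then lifts it to $n^{1/c}$\textsc{LFSD} by reusing the tensor-power construction $T^c$ from the proof of Theorem~\ref{sqrtLFS}: an instance $\Gamma$ with a solution of size $m$ is mapped to $\Lambda=T^c$ with a solution of size $m^c$, so the parameter bound is $h(y)=y^c$ and the correctness rests on Remarks~\ref{zwiekszprim} and~\ref{zmniejszprim} (proved by induction on $c$). You instead keep the number of permutations at $m$ and simply pad the ground set to $[m^c]$ with dummies that are prepended in increasing order in each $\lambda^i_s$ and in decreasing order in each $\lambda^i_t$; the reversed dummy block forces any feasible set to contain at most one dummy, while a single dummy can always be adjoined, so the optimum shifts by exactly one and the parameter map is $h(k)=k+1$. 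Your argument is more elementary (no induction, no solution-size blow-up, and it exploits the downward closure of feasibility, which you correctly note), and it yields a tighter parameter bound; the paper's approach buys nothing extra for this particular theorem beyond reusing machinery already built for the inapproximability result, though it does give a reduction from $n$\textsc{LFSD} itself rather than going back to \textsc{Clique}. The only bookkeeping you should make explicit is the concrete labelling of the dummies as $m+1,\dots,m^c$ so that $\lambda^i_s,\lambda^i_t$ are genuine permutations of $[m^c]$ and the ground-set size is the perfect $c$-th power that the notation $n^{1/c}$\textsc{LFSD} presupposes (consistent with the paper's own assumption $n=\nu^c$); with that, all three conditions of an fpt-reduction are met.
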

\begin{proof}(Sketch). Previously we have noticed that if a $(f,g)$ is an $S$-reduction then $f$ is an $fpt$-reduction.
	In the proof of Theorem \ref{TwoSReductions} we show the $S$-reduction from \textsc{Clique} to $n$\textsc{LFSD}.
	Hence there is $fpt$-reduction from \textsc{Clique} parametrized by the size of required clique to $n$\textsc{LFSD} parametrized
	by the required size of set which induces common sequences for all pairs of permutations. Since  the \textsc{Clique} is $W[1]$-hard 
	the parametrized $n$\textsc{LFSD} is $W[1]$-hard under $fpt$-reductions.

	In the proof of Theorem \ref{sqrtLFS} we defined polynomial time reduction which for a given instance $\Gamma$
	of $n$\textsc{LFSD} returns an instance $\Lambda$ of $n^{1/c}$\textsc{LFSD}. Moreover the $\Lambda$ has solution 
	of the size $m^c$ iff the $\Gamma$ has solution of the size $m.$ It can be shown by induction over the $c$ that 
	presented reduction satisfies conditions $(1)$ and $(2)$ of the $fpt$-reduction definition. 
	The condition $(3)$ is satisfied by function $h(y)=y^c,$  for any positive integer constant $c$.
\end{proof}
\section{Initial Motivation and Conclusions}

In order to obtain novelty solution for hypercubes packing we believe that there exists 
some other metaheuristics which allow to find solution in $n$-dimensional space.
This is our motivation to extend the Ulam's metric to $n$-dimension.
We believe that there exists some mataheuristics which use finding a minimal 
distance between $n$-tuples of permutations providing that this way is computationally effective.
For this reason we rely our study on finding a minimal distance between n-tuples of permutations.
Just because \textsc{NP}-hardness of the $k$\textsc{UD} problem, we realise that finding an effective solution seems 
almost imposible. Sill we hope that these metaheuristics can rely on approximation algorithms.
The hope constitutes our motivatin to persue the study in which we proved that there exist
some approximation algorithms for $k$\textsc{U} problem that have the best approximation factors ranged
from to $\frac{41}{40}$ and $2.$ However we need to stress that our findings can not be found as the best 
ones to resolve the problem because we not obtained the strict factors for each $k.$

We show that hierarchies studied in this paper, namely  $k$\textsc{U} and $k$\textsc{LFS}, refer to each other just in a way like
\textsc{MaxClique} and \textsc{MinVertexCover} do. Additionally, we shed light on the question how our hierarchies can be placed 
in the map of parameterized classes of computational complexity.
%introduced the way
%\bibliographystyle{plainurl}
\bibstyle{plainurl}
\bibliography{biblio2}
%  \cite{DurajKP19}
\section{Appendix}

\subsection{Proof of Remark \ref{RFour}}
\begin{proof}
We consider two cases $(1)$ if $i\in K$ and $(2)$ if $i\not \in K.$ A \textit{conditional split} 
of $K$ is defined to be a pair of disjoin sets $K^{+}_i$ and $K^{-}_i$ that satisfy the following conditions:
\begin{enumerate}

\item \[
K^{+}_i = \left\{
\begin{array}{ll}
K \cap P^i & \mbox{ if } i \not \in K\\
(K \cap P^i) \cup \{i\} & \mbox{ if } i \in K
\end{array}
\right.
\]

\item $K^{-}_i = K \cap N^i$

\end{enumerate}
We will denote by $\alpha_i$ the non repeating sequence of all elements from $K_i^{-}$ in the increasing order,
by $\beta_i$ the non repeating sequence of all elements from $K_i^{+}$ in the increasing order. It is easily seen that depending on if $i\in K$ or $i \not \in K$ the join sequence $\alpha_i\cdot\beta_i$ has either $|K|-1$ or $|K|$ elements. 

Notice that in case $i\in K$ the sequence $\alpha_i$ is empty and $\beta_i$ consists of elements from $K\setminus \{i\}.$ In this case $\sigma^{i}_{s}$ contains subsequence $(i)\cdot \beta_i.$ In permutation $\sigma^{i}_{s}$ element $i$ appears as
the first element of the sequence. Since $\beta_i$ contains elements from $K_i^+$ and $K_i^+,$ by definition, is a subset of $P^i$ and both $b_i$ and $\beta_i$ have the same order, then $\beta_i$ is subsequence of $b_i.$ 
Permutation  $\sigma^{i}_{t}$ contains subsequence $(i)\cdot b_i$ in this case. Thus $\sigma^{i}_{t}$ also contains subsequence $(i)\cdot \beta_i$. Hence both permutations $\sigma^{i}_{s}$ and $\sigma^{i}_{t}$ contain 
subsequence $(i)\cdot \beta_i$. The common subsequence $(i)\cdot \beta_i$ is $K$-induced subsequence for both $\sigma^{i}_{s}$ and $\sigma^{i}_{t}.$

Consider the case $i\not \in K.$ In this case $\alpha_i$ does not have to be empty sequence, unlike it was previously. 
Since $K_i^{-} \subseteq N^i$ and $K_i^{+} \subseteq P^i,$ the sequence $\alpha_i$ is subsequence of $a_i$ and $\beta_i$ is subsequence of $b_i.$ 
In permutation $\sigma^{i}_{s},$  element $i$ precede $a_i\cdot b_i,$ $\alpha_i\cdot \beta_i$ is subsequence of $a_i \cdot b_i.$ 
Hence $\alpha_i \cdot \beta_i$ is subsequence of $(i)\cdot a_i \cdot b_i= \sigma^{i}_{s}.$ By definition $\sigma_{t}^i = a_i \cdot (i) \cdot b_i.$ 
Hence $\alpha_i \cdot \beta_i$ is subsequence of $\sigma_{t}^i.$  Thus $\alpha_i\cdot \beta_i$ is $K$-induced subsequence for $\sigma^{i}_{s}$ and $\sigma^{i}_{t}.$ 
Now, by definition, $K$ is a feasible solution of $\langle \Gamma_s, \Gamma_t \rangle.$

Now assume that a set $K\subseteq [n]$ induces common susequences for all $(\sigma^i_s,\sigma^i_t),$ where $i\in [n].$ Let $\gamma_i$ be a subsequence of $\sigma^i_s=(i)\cdot a_i\cdot b_i$ and $\sigma^i_t=a_i \cdot (i) \cdot b_i.$ 
Let $i\in K.$ It is easy to see that $i$ is the first element of $\gamma_i,$ hence $\gamma_i=(i)\cdot \gamma'_i,$ where $\gamma'_i,$ is the subsequence of $b_i.$ By definition of $b_i$ all its elements are in $P_i.$
This means that $iEx$ for all $x\in \{\gamma'_i\}.$ Moreover $\{\gamma'_i\}=K\setminus \{i\}.$ Now we can make a conclusion that for all different $i,x \in K$  $iEx.$ This means that $K$ is a clique in $G.$
\end{proof}

\subsection{Proof of Remark \ref{RFive}}
\begin{proof}
Suppose that some $B \subseteq V,$ which satisfies $|B|>m,$ induces subsequences for all 
$(\sigma^{i}_{s},\sigma^{i}_{t}).$ Subgraph of $G$ induced by $B$ cannot be a clique. 
Therefore, there are $\alpha, \beta \in B,$ such that  $\neg \alpha E \beta.$  In permutation $\sigma_{s}^\alpha$ element $\alpha$ precedes $\beta$ because, by definition of $\sigma_{s}^\alpha,$ $\alpha$ precedes all other members of $V^{-\alpha}.$ In permutation $\sigma_{t}^\alpha$ elements from $a_\alpha$ precedes $\alpha.$  By definitions of $U^{\alpha}$ and $a_\alpha$ we have $\beta$ occurs in $a_\alpha.$ Thus 
$\alpha$ and $\beta$ appear in $\sigma_{t}^\alpha$ and $\sigma_{s}^\alpha$ in reverse order. Therefore, there is no sequence induced by $B$ which is common subsequence of $\sigma_{t}^\alpha$ and $\sigma_{s}^\alpha.$ This is contradiction. 
\end{proof}

\subsection{Proof of Remark \ref{zwiekszprim}}

\begin{proof}  Let $B^{c-1} \subseteq [\nu^{c-1}]$ be a set, which induces common subsequence for all $(\tau^{x}_{s},\tau^{x}_{t}).$ 
Let $B \subseteq [\nu]$ be a set, which induces common subsequence for all $(\sigma^{x}_{s},\sigma^{x}_{t}).$
Any pair of different elements $i,j \in B$ appears in the same order in $\sigma^{x}_{s}$ and $\sigma^{x}_{t}$ for all $x \in [\nu].$ 
This means that $i$ occurs in the earlier position than $j$ in $\sigma^{x}_{t}$ if and only if $i$ occurs in the earlier position than $j$ in $\sigma^{x}_{s}.$
Similar remark holds for all pairs $(\tau^{x}_{s},\tau^{x}_{t}) \in S^2_{\nu^{c-1}}$ and $k,l \in [\nu^{c-1}].$

We claim that $B^c=\{(j-1)\nu^{c-1}+k\;|\; j \in B, k\in B^{c-1}\}$ is solution of $\Lambda.$ 
On the contrary, suppose that $B^c$ do not induces common subsequence of all pairs $(\lambda^{x}_{s},\lambda^{x}_{t}).$ 
Thus we may find $y$ such that $B^c$-induced sequences $\rho$ and $\gamma,$ for $\lambda^{y}_{s}$ and $\lambda^{y}_{t},$ 
and $\rho \neq \gamma,$ respectively. Then there are elements $(i-1)\nu^{c-1}+l$ and $(j-1)\nu^{c-1}+k$ that cause first difference in $\rho$ and $\gamma.$ 
Thus either $i \neq j$ or $i=j.$ For the first case, by definition of $\lambda^{y}_{s}$ and $\lambda^{y}_{t},$ without loss of generality, one can assume that 
in the  sequence $\bigodot\limits_{r\rightarrow \sigma^y_s} f_r(\tau^y_s),$ element $(i-1)\nu^{c-1}+l$ appears before element $(j-1)\nu^{c-1}+k$ and 
in the $\bigodot\limits_{r\rightarrow \sigma^y_t} f_r(\tau^y_t)$ they are in the opposite order. This holds for any $l,k\in [\nu]$ because all elements from
$f_i([\nu^{c-1}])$ appear before elements from $f_j([\nu^{c-1}])$ if $i$ appears before $j$ in $\sigma_s^y.$ Elements $i,j$ are in the opposite order in $\sigma^y_t.$ 
By definition of $B^c;$ $i,j \in B$ and we assumed at the beginning that $B$ induces common subsequence in all $(\sigma^x_s,\sigma^x_t),$ in particular in $(\sigma^y_s,\sigma^y_t).$
This is contradiction.

In the case $i=j,$ element $(i-1)\nu^{c-1}+l$ appears before $(i-1)\nu^{c-1}+k$ in $f_i(\tau^y_s)$ and  in the $f_i(\tau^y_t)$ they are in the opposite order.
This means that $l,k$ are in different orders in $\tau^y_s$ and $\tau^y_t.$ By definition of $B^c,$ $k,l\in B^{c-1},$ but it is assumed $B^{c-1}$
induces common subsequence for all $(\tau^x_s,\tau^x_t),$ in particular in $(\tau^y_s,\tau^y_t).$
This is contradiction.
\end{proof}

\subsection{Proof of Remark \ref{zmniejszprim}}

\begin{proof} Assume that there is $J \subseteq [\nu^c]$ of the size $m^c$ that is solution of $\Lambda.$ %Define auxiliary functions $block$ and $inblock$ both of type $[n] \mapsto [\nu]:$

Let $R(l)=\{s \in J\;|\; inblock^{c-1}(s) \in H^c(l,J)\}.$ First, notice that $J=\bigcup\limits_{l\in W^c(J)} R(l)$ and $R(l)$ are pairwise disjoint sets.
It is easy to see if $|W^c(J)| < m,$ then there exists $l$ such that $|R(l)| \geq m^{c-1}.$ At least one of two cases holds: $W^c(J)$ is a set which is feasible 
solution of $\Gamma$ or $H^c(l,J)$ is feasible solution of $T^{c-1}.$ In particular if $l$ be chosen to make $H^c(l,J)$ of maximal size then $l$ be equal $\beta^c(J).$

For any $p,r \in W^c(J)$ there are $\pi=(p-1)\nu^{c-1}+ k, \rho=(r-1)\nu^{c-1}+ l \in J.$ By definition of $\lambda^i_s,\lambda^i_t,$ if $\pi$ and $\rho$ appear in the same order
in $\lambda^i_s,\lambda^t_s$ then $p$ and $r$ are in the same order in $\sigma^i_s$ and $\sigma^i_t.$ Thus, indeed $W^c(J)$ is a solution of $\Gamma.$

For any $a,b \in H^c(l,J),$ there are $\pi,\rho \in J$ such that $a=inblock^{c-1}(\pi),$ $b=inblock^{c-1}(\rho),$ $l=block^{c-1}(\pi)$ and 
$l=block^{c-1}(\rho).$  By definition of $block^{c-1}$ and $inblock^{c-1},$  $\pi=(l-1)\nu^{c-1} + a$  and 
$\rho=(l-1)\nu^{c-1} + b.$ 

For all $i\in [\nu],$ numbers $\pi$ and $\rho$ occur in the same order in $\lambda^i_s$ and $\lambda^i_t.$ 
Sequences $f_{l}(\tau^i_s)$ and $f_{l}(\tau^i_t)$  are substrings of $\lambda^i_s$ and $\lambda^i_t,$ respectively.
Moreover $\pi$ and $\rho$ occur in $f_{l}(\tau^i_s)$ and $f_{l}(\tau^i_t)$ in the same order. 
It means that, for each $i\in [\nu],$  $a$ and $b$ appear in the same order in $\tau^i_s$ and $\tau^i_t$ (by definition of $f_l$).
Thus, if $a,b \in H^c(l,J),$ then $a$ and $b$ appear in the same order in $\tau^i_s$ and $\tau^i_t.$ Therefore, for all $l\in [\nu],$
$H^c(l,J)$ is feasible solution of $T^{c-1}.$ In case $W^c(J) < m,$  $R(\beta^c(J)) \geq m^{c-1},$ this implies that 
$H^c(\beta^c(J),J) \geq m^{c-1}.$
\end{proof}

\end{document}